\newtheorem{theorem}{Theorem}
\newtheorem{lemma}{Lemma}
\newtheorem{proposition}{Proposition}
\newtheorem{corollary}{Corollary}
\newtheorem{question}{Question}
\newtheorem{conjecture}{Conjecture}
\newenvironment{proof}{{\bf Proof.}}{\hfill\rule{2mm}{2mm}}
\newtheorem{remarka}{Remark}
\def\emline#1#2#3#4#5#6{%
       \put(#1,#2){\special{em:moveto}}%
       \put(#4,#5){\special{em:lineto}}}
\def\newpic#1{}
\def\so{{\rm so}}
\newlength{\cellwid}
\title{On minimum vertex cover of \\generalized Petersen graphs}
\author
{%
Babak Behsaz\thanks{Department of Computer Engineering and
Information Technology, Amirkabir University of Technology (Tehran
Polytechnic), 424 Hafez Ave., Tehran, Iran. Email:\ {\tt
behsaz@ce.aut.ac.ir.}}
\and
Pooya Hatami\thanks{Department of Mathematical Sciences, Sharif
University of Technology, P.O. Box 11365-9415, Azadi Ave., Tehran,
Iran. Email:\ {\tt p\_hatami@ce.sharif.edu.}}
\and
Ebadollah S. Mahmoodian\thanks{Department of Mathematical Sciences,
Sharif University of Technology, P.O. Box 11365-9415, Azadi Ave.,
Tehran, Iran. Email:\ {\tt emahmood@sharif.edu.}}
}%
\date{}
\begin{document}

\maketitle

\begin{abstract}
For natural numbers $n$ and $k$ ($n > 2k$), a generalized Petersen
graph $P(n,k)$, is defined by vertex set $\lbrace u_i,v_i\rbrace$
and edge set $\lbrace u_iu_{i+1},u_iv_i,v_iv_{i+k}\rbrace$; where
$i = 1,2,\dots,n$ and subscripts are reduced modulo $n$. Here
first, we characterize minimum vertex covers in generalized
Petersen graphs. Second, we present a lower bound and some upper
bounds for $\beta(P(n,k))$, the size of minimum vertex cover of
$P(n,k)$. Third, in some cases, we determine the exact values of
$\beta(P(n,k))$. Our conjecture is that $\beta(P(n,k)) \le n +
\lceil\frac{n}{5}\rceil$, for all $n$ and $k$.
\end{abstract}

\section{Introduction and preliminaries}
For the definition of basic concepts not given here, one may refer
to a text book in graph theory, for example~\cite{west01}. A set $Q$
of vertices of a graph $G = (V,E)$ is called a \textsf{vertex
cover}, if each edge in $E$ has at least one endpoint in $Q$. A
vertex cover with minimum size in a graph $G$, is called a
\textsf{minimum vertex cover} of $G$ and its size is denoted by
$\beta(G)$. It is well-known that the vertex-cover problem is an
NP-complete problem~\cite[p.~1006]{cormen01}. Therefore, many
attempts are made to find lower and upper bounds, and exact values
of $\beta(G)$ for special classes of graphs.

This paper is aimed toward studying vertex cover problem for the
class of generalized Petersen graphs. In a generalized Petersen
graph $P(n,k)$, as defined in abstract, let $U = \{u_1, u_2,
\dots, u_n\}$ and $V = \{v_1, v_2, \dots, v_n\}$. We call two
vertices $u_i$ and $v_i$ as \textsf{twin} of each other and the
edge between them as a \textsf{spoke}. Also, by twin of $S$,
where $S$ is a subset of $V$, we mean the set that contains twins
of all members of $S$. Moreover, the edges with both endpoints in
$U$ are called \textsf{$U$-edges} and the edges with both
endpoints in $V$ are called \textsf{$V$-edges}. A maximal subset
of consecutive vertices of $V$ in a vertex cover $Q$, is called a
\textsf{strip} of $Q$ (two vertices of $V$ are consecutive if
they have circular consecutive subscripts). We define the
\textsf{size of a strip} by the number of its vertices. Note that
the size of a strip may be equal to $1$. Also, we call a strip
\textsf{odd} if it has an odd size. Finally, we call the set of
$m$ consecutive twins, namely $\lbrace
u_{i+1},u_{i+2},\dots,u_{i+m} \rbrace \thinspace \cup \thinspace
\lbrace v_{i+1},v_{i+2},\dots,v_{i+m} \rbrace$ an
\textsf{$m$-sector} of generalized Petersen graph. In
Figure~\ref{fig1} a vertex cover is shown for $P(16,5)$. For
example, the set $\{v_{16},v_1,v_2,v_3,v_4\}$ is a strip.

\vspace*{.5cm}
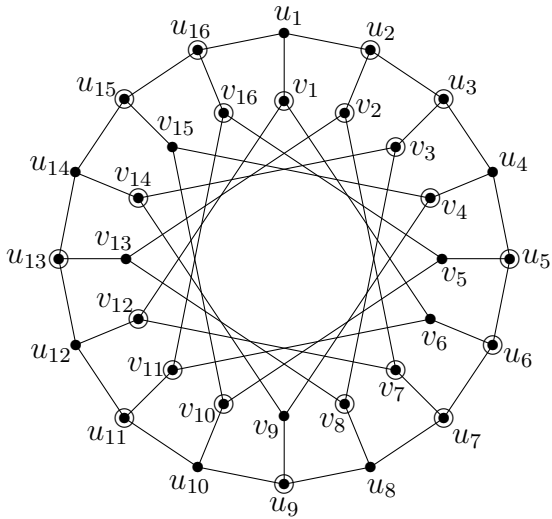
\begin{figure}[ht]
\label{fig1}
\unitlength 0.6mm
\special{em:linewidth 0.4pt}
\linethickness{0.4pt}

\begin{center}

\begin{picture}(100,105)(-47,-55)
\put(0,35){\circle*{2.5}} \put(0,35){\circle{4}}
\put(0,50){\circle*{2.5}}
\put(4.87196,37.7918){\makebox(0,0)[cc]{$v_{1}$}}
\put(1,54){\makebox(0,0)[cc]{$u_{1}$}}
\put(13.3935,32.3359){\circle*{2.5}}\put(13.3935,32.3359){\circle{4}}
\put(19.1336,46.1942){\circle*{2.5}}\put(19.1336,46.1942){\circle{4}}
\put(18.8315,33.3953){\makebox(0,0)[cc]{$v_{2}$}}
\put(21.4643,50.4597){\makebox(0,0)[cc]{$u_{2}$}}
\put(24.7482,24.7493){\circle*{2.5}}\put(24.7482,24.7493){\circle{4}}
\put(35.3545,35.3562){\circle*{2.5}}\put(35.3545,35.3562){\circle{4}}
\put(30.8309,23.915){\makebox(0,0)[cc]{$v_{3}$}}
\put(39.5829,38.5847){\makebox(0,0)[cc]{$u_{3}$}}
\put(32.3353,13.395){\circle*{2.5}}\put(32.3353,13.395){\circle{4}}
\put(46.1933,19.1358){\circle*{2.5}}
\put(37.8347,10.794){\makebox(0,0)[cc]{$v_{4}$}}
\put(50.8888,21.5666){\makebox(0,0)[cc]{$u_{4}$}}
\put(35,0.00162144){\circle*{2.5}}
\put(50,0.00231634){\circle*{2.5}}\put(50,0.00231634){\circle{4}}
\put(37.792,-3.97021){\makebox(0,0)[cc]{$v_{5}$}}
\put(56,0.00250165){\makebox(0,0)[cc]{$u_{5}$}}
\put(32.3366,-13.392){\circle*{2.5}}
\put(46.1951,-19.1315){\circle*{2.5}}\put(46.1951,-19.1315){\circle{4}}
\put(33.3962,-18.13){\makebox(0,0)[cc]{$v_{6}$}}
\put(51.8907,-21.962){\makebox(0,0)[cc]{$u_{6}$}}
\put(24.7505,-24.747){\circle*{2.5}}\put(24.7505,-24.747){\circle{4}}
\put(35.3578,-35.3529){\circle*{2.5}}\put(35.3578,-35.3529){\circle{4}}
\put(23.9163,-29.5298){\makebox(0,0)[cc]{$v_{7}$}}
\put(40.4864,-39.8811){\makebox(0,0)[cc]{$u_{7}$}}
\put(13.3965,-32.3347){\circle*{2.5}}\put(13.3965,-32.3347){\circle{4}}
\put(19.1379,-46.1924){\circle*{2.5}}
\put(10.7956,-36.4342){\makebox(0,0)[cc]{$v_{8}$}}
\put(21.6689,-50.8878){\makebox(0,0)[cc]{$u_{8}$}}
\put(0.00324288,-35){\circle*{2.5}}
\put(0.00463268,-50){\circle*{2.5}}\put(0.00463268,-50){\circle{4}}
\put(-3.96846,-37.7922){\makebox(0,0)[cc]{$v_{9}$}}
\put(0.00500329,-55){\makebox(0,0)[cc]{$u_{9}$}}
\put(-13.3905,-32.3372){\circle*{2.5}}\put(-13.3905,-32.3372){\circle{4}}
\put(-19.1294,-46.196){\circle*{2.5}}
\put(-19.1284,-34.097){\makebox(0,0)[cc]{$v_{10}$}}
\put(-20.9597,-49.9916){\makebox(0,0)[cc]{$u_{10}$}}
\put(-24.7459,-24.7516){\circle*{2.5}}\put(-24.7459,-24.7516){\circle{4}}
\put(-35.3512,-35.3594){\circle*{2.5}}\put(-35.3512,-35.3594){\circle{4}}
\put(-30.5287,-23.9177){\makebox(0,0)[cc]{$v_{11}$}}
\put(-39.1793,-39.1882){\makebox(0,0)[cc]{$u_{11}$}}
\put(-32.3341,-13.398){\circle*{2.5}}\put(-32.3341,-13.398){\circle{4}}
\put(-46.1915,-19.1401){\circle*{2.5}}
\put(-37.4337,-9.7973){\makebox(0,0)[cc]{$v_{12}$}}
\put(-51.6869,-20.9713){\makebox(0,0)[cc]{$u_{12}$}}
\put(-35,-0.00486431){\circle*{2.5}}
\put(-50,-0.00694902){\circle*{2.5}}\put(-50,-0.00694902){\circle{4}}
\put(-37.7924,3.96671){\makebox(0,0)[cc]{$v_{13}$}}
\put(-57,-0.00750494){\makebox(0,0)[cc]{$u_{13}$}}
\put(-32.3378,13.3891){\circle*{2.5}}\put(-32.3378,13.3891){\circle{4}}
\put(-46.1969,19.1272){\circle*{2.5}}
\put(-33.3978,18.1269){\makebox(0,0)[cc]{$v_{14}$}}
\put(-51.8926,20.6574){\makebox(0,0)[cc]{$u_{14}$}}
\put(-24.7527,24.7447){\circle*{2.5}}
\put(-35.3611,35.3496){\circle*{2.5}}\put(-35.3611,35.3496){\circle{4}}
\put(-23.9191,29.5276){\makebox(0,0)[cc]{$v_{15}$}}
\put(-41.50,38.1776){\makebox(0,0)[cc]{$u_{15}$}}
\put(-13.3995,32.3335){\circle*{2.5}}\put(-13.3995,32.3335){\circle{4}}
\put(-19.1422,46.1907){\circle*{2.5}}\put(-19.1422,46.1907){\circle{4}}
\put(-9.799,36.4332){\makebox(0,0)[cc]{$v_{16}$}}
\put(-20.6736,50.8859){\makebox(0,0)[cc]{$u_{16}$}}
\emline{0}{50}{1}{19.1336}{46.1942}{2}
\emline{0}{35}{1}{32.3366}{-13.392}{2} \emline{0}{35}{1}{0}{50}{2}
\emline{19.1336}{46.1942}{1}{35.3545}{35.3562}{2}
\emline{13.3935}{32.3359}{1}{24.7505}{-24.747}{2}
\emline{13.3935}{32.3359}{1}{19.1336}{46.1942}{2}
\emline{35.3545}{35.3562}{1}{46.1933}{19.1358}{2}
\emline{24.7482}{24.7493}{1}{13.3965}{-32.3347}{2}
\emline{24.7482}{24.7493}{1}{35.3545}{35.3562}{2}
\emline{46.1933}{19.1358}{1}{50}{0.00231634}{2}
\emline{32.3353}{13.395}{1}{0.00324288}{-35}{2}
\emline{32.3353}{13.395}{1}{46.1933}{19.1358}{2}
\emline{50}{0.00231634}{1}{46.1951}{-19.1315}{2}
\emline{35}{0.00162144}{1}{-13.3905}{-32.3372}{2}
\emline{35}{0.00162144}{1}{50}{0.00231634}{2}
\emline{46.1951}{-19.1315}{1}{35.3578}{-35.3529}{2}
\emline{32.3366}{-13.392}{1}{-24.7459}{-24.7516}{2}
\emline{32.3366}{-13.392}{1}{46.1951}{-19.1315}{2}
\emline{35.3578}{-35.3529}{1}{19.1379}{-46.1924}{2}
\emline{24.7505}{-24.747}{1}{-32.3341}{-13.398}{2}
\emline{24.7505}{-24.747}{1}{35.3578}{-35.3529}{2}
\emline{19.1379}{-46.1924}{1}{0.00463268}{-50}{2}
\emline{13.3965}{-32.3347}{1}{-35}{-0.00486431}{2}
\emline{13.3965}{-32.3347}{1}{19.1379}{-46.1924}{2}
\emline{0.00463268}{-50}{1}{-19.1294}{-46.196}{2}
\emline{0.00324288}{-35}{1}{-32.3378}{13.3891}{2}
\emline{0.00324288}{-35}{1}{0.00463268}{-50}{2}
\emline{-19.1294}{-46.196}{1}{-35.3512}{-35.3594}{2}
\emline{-13.3905}{-32.3372}{1}{-24.7527}{24.7447}{2}
\emline{-13.3905}{-32.3372}{1}{-19.1294}{-46.196}{2}
\emline{-35.3512}{-35.3594}{1}{-46.1915}{-19.1401}{2}
\emline{-24.7459}{-24.7516}{1}{-13.3995}{32.3335}{2}
\emline{-24.7459}{-24.7516}{1}{-35.3512}{-35.3594}{2}
\emline{-46.1915}{-19.1401}{1}{-50}{-0.00694902}{2}
\emline{-32.3341}{-13.398}{1}{0}{35}{2}
\emline{-32.3341}{-13.398}{1}{-46.1915}{-19.1401}{2}
\emline{-50}{-0.00694902}{1}{-46.1969}{19.1272}{2}
\emline{-35}{-0.00486431}{1}{13.3935}{32.3359}{2}
\emline{-35}{-0.00486431}{1}{-50}{-0.00694902}{2}
\emline{-46.1969}{19.1272}{1}{-35.3611}{35.3496}{2}
\emline{-32.3378}{13.3891}{1}{24.7482}{24.7493}{2}
\emline{-32.3378}{13.3891}{1}{-46.1969}{19.1272}{2}
\emline{-35.3611}{35.3496}{1}{-19.1422}{46.1907}{2}
\emline{-24.7527}{24.7447}{1}{32.3353}{13.395}{2}
\emline{-24.7527}{24.7447}{1}{-35.3611}{35.3496}{2}
\emline{-19.1422}{46.1907}{1}{0}{50}{2}
\emline{-13.3995}{32.3335}{1}{35}{0.00162144}{2}
\emline{-13.3995}{32.3335}{1}{-19.1422}{46.1907}{2}
\end{picture}

\caption{A vertex cover of P(16,5).}
\end{center}

\end{figure}

We now introduce some more definitions and conventions that will
be used in the following sections. First, \textsf{from this point
on} for brevity, we use \textsf{cover} instead of \textsf{vertex
cover}. Second, we show covers by lower case alphabet, because we
consider them as a function on $U\cup V$, which defines the
selected and unselected vertices. In addition, by $\lvert
c\rvert$ we mean the number of vertices selected by a cover $c$.
Third, we adopt the convention that all subscripts of vertices of
a generalized Petersen graph when they are out of the range
$\{1,2,\dots,n\}$, to be reduced modulo $n$ to belong to this set.
Fourth, in any generalized Petersen graph a cover which selects
all $V$-vertices, besides some $U$-vertices, is called a {\sf
trivial cover}, and it is not a minimum cover.  We denote by
$\mathcal{C}(P(n,k))$ the set of all non-trivial covers of
$P(n,k)$. Without loss of correctness {\it from this point on} by
a cover we mean a non-trivial cover. This assumption is needed
for Lemma~\ref{lem1}, part (a).

It seems that Watkins~\cite{watkins69} was the first who introduced
this class of generalized Petersen graphs and conjectured that they
have a Tait coloring, except $P(5,2)$. This conjecture later was
proved in~\cite{castagna72}. Since then this class of graphs has
been studied widely because of its interesting traits. There are
papers discussing topics such as tough sets, labeling problems, wide
diameters, and coloring of generalized Petersen graphs. For example
papers~\cite{alspach83},~\cite{bannai78}~and~\cite{schwenk89} are
about hamiltonian character of generalized Petersen graphs. Also
crossing numbers of this class is studied in papers such as~
\cite{mcquillan92} and~\cite{salazar05}. Recently vertex domination
of generalized Petersen graphs is studied~\cite{ebrahimi06}.

This paper is organized as follows. In Section~$2$ we characterize
properties of minimum vertex cover of generalized Petersen graphs by
defining a new quantity. In Section $3$ we introduce a lower bound
and some upper bounds for various cases of these graphs, namely:
\begin{itemize}
  \item $\beta(P(n,k)) \ge n+\frac{(n,k)+1}{2}$, for all odd $n$,
where $(n,k)$ is the greatest common divisor of $n$ and $k$;
  \item $\beta(P(n,k))\le n + \frac{k+1}{2}$ for all odd $k$;
  \item for all $m<k$, \\$\beta(P(n,k))\le \frac{n}{m} \beta(P(m,k\pmod
  m))$,
  where $m\mid n$ \ and \\  $\beta(P(n,k))\le \lfloor\frac{n}{m} \rfloor
  \beta(P(m,k \pmod m))+2k$, otherwise.
\end{itemize}
We note that most of these bounds are sharp. In Section~$4$, we
determine the exact values of $\beta(P(n,k))$:
\begin{itemize}
    \item for $k=1$ and $3$;
    \item when $n$ is even and $k$ is odd;
    \item when both $n$ and $k$ are odd and $k\mid n$.
\end{itemize}
Section $5$ contains concluding remarks.

\section{Properties of minimum vertex cover}
Let $S=\lbrace v_i,v_{i+1},\dots,v_{i+m}\rbrace$ be a strip of a
cover $c$. By \textsf{optimally selecting from twins of} $S$, we
mean choosing from $u_i,u_{i+1},\dots,u_{i+m}$ alternatively
beginning with $u_{i+1}$. For example, the cover in
Figure~\ref{fig1} is optimally selected from twins of the strip
$\lbrace v_{10},v_{11},v_{12}\rbrace$, but not from twins of the
strip $\lbrace v_{16},v_1,v_2,v_3,v_4\rbrace$. From any  cover
$c$ we construct a \textsf{semi-optimal} cover, $\so(c)$, in the
following way: First, we choose all vertices from $V$ selected by
$c$. Second, we choose twins of all vertices of $V$ which are not
selected by $c$. At last, we optimally select from twins of
strips of $c$. In Lemma~\ref{lem1}, we show that $\so(c)$ is a
cover, whose size is not greater than $c$.

\begin{lemma}\label{lem1}
Let $c$ be a cover of the generalized Petersen graph $P(n,k)$,
then
\begin{enumerate}
    \item[(a)] $\so(c)$ is a cover of $P(n,k)$, and
    \item[(b)] $\lvert \so(c)\rvert\le\lvert c\rvert$.
\end{enumerate}
\end{lemma}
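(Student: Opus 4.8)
The two parts will be handled separately: part~(a) by a case analysis over the three types of edges of $P(n,k)$, and part~(b) by a counting argument driven by the strip structure of $c$. Throughout I treat $c$ also as the set of vertices it selects.

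For part~(a) I would verify that every edge of $P(n,k)$ meets $\so(c)$. Spokes and $V$-edges are immediate: for a spoke $u_iv_i$, either $v_i$ is selected by $c$, and then $v_i\in\so(c)$ by the first step of the construction, or $v_i$ is unselected, and then $u_i\in\so(c)$ by the second step; and since the first step places every $V$-vertex selected by $c$ into $\so(c)$ while $c$ already covers all $V$-edges, every $V$-edge meets $\so(c)$. The only real case is a $U$-edge $u_au_{a+1}$. If at least one of $v_a,v_{a+1}$ is unselected by $c$, its twin lies in $\so(c)$ by the second step. Otherwise $v_a,v_{a+1}$ are consecutive $V$-vertices both selected by $c$, so they lie in a common strip $S=\{v_i,\dots,v_{i+m}\}$; here non-triviality of $c$ gives $m\le n-2$, so the indices $i,\dots,i+m$ are genuinely distinct and $a-i\in\{0,\dots,m-1\}$. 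Exactly one of $a-i$, $(a+1)-i$ is odd; calling it $2t+1$ we get $1\le 2t+1\le m$, so $u_{i+2t+1}\in\so(c)$ by the optimal selection on $S$, and $u_{i+2t+1}\in\{u_a,u_{a+1}\}$. This settles~(a).

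For part~(b) I would compute $\lvert\so(c)\rvert$ exactly and match it by a lower bound on $\lvert c\rvert$. With $t=\lvert V\cap c\rvert$, the construction of $\so(c)$ contributes the $t$ selected $V$-vertices, the $n-t$ twins of the unselected $V$-vertices, and, for each strip $S$, exactly $\lfloor\lvert S\rvert/2\rfloor$ of its twins; these three groups are pairwise disjoint, so $\lvert\so(c)\rvert=n+\sum_S\lfloor\lvert S\rvert/2\rfloor$, the sum over all strips of $c$. Next partition $U$ into the set $W=\{u_j:v_j\notin c\}$, of size $n-t$, together with, for each strip $S=\{v_i,\dots,v_{i+m}\}$, the block $B_S=\{u_i,\dots,u_{i+m}\}$; the blocks are pairwise disjoint and disjoint from $W$ because the strips partition $V\cap c$. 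Every spoke $u_jv_j$ with $v_j\notin c$ forces $u_j\in c$, so $W\subseteq c$; and inside $B_S$ the cover $c$ must cover the $m$ path-edges $u_iu_{i+1},\dots,u_{i+m-1}u_{i+m}$, which costs at least $\lfloor(m+1)/2\rfloor=\lfloor\lvert S\rvert/2\rfloor$ vertices. Summing over blocks, $\lvert U\cap c\rvert\ge(n-t)+\sum_S\lfloor\lvert S\rvert/2\rfloor$, hence $\lvert c\rvert=t+\lvert U\cap c\rvert\ge n+\sum_S\lfloor\lvert S\rvert/2\rfloor=\lvert\so(c)\rvert$.

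The one delicate point is the $U$-edge case of~(a): one has to be sure that the alternating (``optimal'') selection on each strip, combined with the twins of the unselected $V$-vertices flanking that strip, leaves no $U$-edge uncovered, and this is exactly where the standing convention that $c$ is non-trivial is needed, since it guarantees that no strip is all of $V$ and that every strip is a true interval of indices with unselected $V$-vertices on both sides. Once this is in place, part~(b) is routine; the only things to check there are the disjointness of $W$ and the blocks $B_S$ and the elementary fact that a vertex cover of a path on $m$ edges has at least $\lfloor(m+1)/2\rfloor$ vertices.
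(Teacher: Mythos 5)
Your proposal is correct and follows essentially the same route as the paper: part (a) checks spokes, $V$-edges and $U$-edges using the alternating selection on strips (with non-triviality ruling out a strip equal to all of $V$), and part (b) rests on the same observation that within the twin block of a strip of size $m$ any cover must use at least $\lceil (m-1)/2\rceil$ vertices to cover the internal $U$-edges, which is exactly what $\so(c)$ uses. Your version merely aggregates this per-strip comparison into one global count $\lvert \so(c)\rvert = n+\sum_S\lfloor \lvert S\rvert/2\rfloor \le \lvert c\rvert$, making explicit the disjointness bookkeeping that the paper leaves implicit.
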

\begin{proof}
(a) Note that the vertices selected from $V$ are the same in $c$
and $\so(c)$, so all $V$-edges are covered by $\so(c)$. Also, it
is evident that all spokes are covered by $\so(c)$. We show that,
all $U$-edges are also covered by $\so(c)$. For, suppose in
contradiction an edge $u_iu_{i+1}$ is not covered by $\so(c)$.
Then, the vertices $v_i$ and $v_{i+1}$ must have been selected by
$c$ and are in the same strip, say $S$. Since twins of $S$ must
be optimally selected, at least one of $u_i$ or $u_{i+1}$ must be
selected in $\so(c)$ (Note that as stated before, we assumed that
in the covers which we consider, not all vertices of $V$ are
selected). This is in contradiction with our assumption.

(b) Note that the only possible difference between $c$ and $\so(c)$
are in their selection from twins of their strips. It is enough to
show that for any strip $S=\{v_i,v_{i+1},\dots,v_{i+m-1}\}$ of $c$
and $\so(c)$, the number of selected vertices from twins of $S$ in
$\so(c)$ is not greater than the number of selected vertices from
twins of $S$ in $c$. To prove this, first, note that $\so(c)$
selects exactly $\lceil \frac{m-1}{2} \rceil$ vertices from the
twins of $S$. Second, note that $m-1$ edges between
$u_i,u_{i+1},\dots,u_{i+m-1}$ must be covered by the twins of $S$.
Since each vertex can cover at most two $U$-edges, the minimum
number of vertices that must be chosen from the twins of $S$ in any
cover is $\lceil \frac{m-1}{2} \rceil$. This shows that $\so(c)$
chooses the minimum possible vertices from the twins of $S$.
\end{proof}
\begin{corollary}\label{cor:new}
For a minimum cover $c^*$, we have $\lvert so(c^*) \rvert = \lvert
c^* \rvert$.
\end{corollary}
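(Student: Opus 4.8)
The plan is to obtain this as an immediate consequence of Lemma~\ref{lem1}, since the corollary is really just the observation that a size-non-increasing transformation cannot strictly decrease the size of something that is already of minimum size. First I would invoke Lemma~\ref{lem1}(a) to note that $\so(c^*)$ is itself a cover of $P(n,k)$. I would also check that $\so(c^*)$ is admissible as a (non-trivial) cover in the sense used throughout the paper: by construction $\so$ selects from $V$ exactly the $V$-vertices that $c^*$ selects, and since $c^*$ is non-trivial it does not select all of $V$, hence neither does $\so(c^*)$. So $\so(c^*) \in \mathcal{C}(P(n,k))$ and is a legitimate competitor for the minimum.

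Next I would apply Lemma~\ref{lem1}(b) to get $\lvert \so(c^*)\rvert \le \lvert c^*\rvert$. Conversely, since $c^*$ is a minimum cover and $\so(c^*)$ is a cover, minimality gives $\lvert c^*\rvert \le \lvert \so(c^*)\rvert$. Combining the two inequalities yields $\lvert \so(c^*)\rvert = \lvert c^*\rvert$, which is the claim.

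There is essentially no obstacle here; the only point requiring a moment's care is the verification that $\so(c^*)$ still counts as a cover under the standing non-triviality convention, and this is handled by the remark above that $\so$ leaves the set of selected $V$-vertices unchanged. Hence the corollary follows at once from Lemma~\ref{lem1}.
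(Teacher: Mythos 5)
Your argument is exactly the intended one: the paper states this corollary without proof precisely because it follows immediately from Lemma~\ref{lem1} together with the minimality of $c^*$, which is the two-inequality sandwich you give. Your extra check that $\so(c^*)$ remains non-trivial (since it selects the same $V$-vertices as $c^*$) is a correct and welcome refinement of the same approach.
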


Now we are ready to present our main characterizing theorem. Let
$a(c)$ be the number of vertices selected from $V$ by a cover $c$ of
a generalized Petersen graph $P(n,k)$, and let $b(c)$ denote the
number of odd strips of $c$. Also, let $d(P(n,k)) = \min_{c \in
\mathcal{C}(P(n,k))}(a(c)-b(c))$.
\begin{theorem}\label{thm1}
For any minimum vertex cover $c^*$ of $P(n,k)$ we have,
$a(c^*)-b(c^*) = d(P(n,k))$.
\end{theorem}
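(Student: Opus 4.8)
The plan is to obtain an exact formula for $\lvert\so(c)\rvert$ in terms of $a(c)$ and $b(c)$ that holds for every (non-trivial) cover $c$, and then to deduce the theorem from Corollary~\ref{cor:new} together with the minimality of $\beta(P(n,k))$.

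First I would count the vertices of $\so(c)$. By its construction, $\so(c)$ consists of (i) the $a(c)$ vertices of $V$ selected by $c$, (ii) the twins of the $n-a(c)$ vertices of $V$ not selected by $c$, and (iii) the vertices chosen optimally from the twins of the strips of $c$; these three groups are pairwise disjoint. As observed in the proof of Lemma~\ref{lem1}(b), a strip of size $m$ contributes exactly $\lceil\frac{m-1}{2}\rceil=\lfloor\frac{m}{2}\rfloor$ vertices to group (iii). Since $\lfloor\frac{m}{2}\rfloor$ equals $\frac{m}{2}$ for even $m$ and $\frac{m-1}{2}$ for odd $m$, summing over all strips and using $\sum_S\lvert S\rvert=a(c)$ shows that group (iii) has size $\frac{1}{2}(a(c)-b(c))$ --- in particular $a(c)-b(c)$ is always even. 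Therefore
\[
\lvert\so(c)\rvert=a(c)+(n-a(c))+\frac{1}{2}(a(c)-b(c))=n+\frac{1}{2}(a(c)-b(c)).
\]

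Next I would apply this identity in two ways. For a minimum cover $c^*$: since $\so(c^*)$ selects exactly the vertices of $V$ that $c^*$ selects, its strips coincide with those of $c^*$, and $c^*$ being non-trivial forces $\so(c^*)$ to be non-trivial; by Corollary~\ref{cor:new}, $\lvert\so(c^*)\rvert=\lvert c^*\rvert=\beta(P(n,k))$, so the identity gives $a(c^*)-b(c^*)=2(\beta(P(n,k))-n)$, a value that does not depend on which minimum cover is chosen. For an arbitrary $c\in\mathcal{C}(P(n,k))$: the set $\so(c)$ is a cover by Lemma~\ref{lem1}(a) and is non-trivial (it selects the same proper subset of $V$ as $c$), hence $\so(c)\in\mathcal{C}(P(n,k))$ and $\lvert\so(c)\rvert\ge\beta(P(n,k))$; combined with the identity this yields $a(c)-b(c)\ge 2(\beta(P(n,k))-n)=a(c^*)-b(c^*)$. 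Consequently $a(c^*)-b(c^*)=\min_{c\in\mathcal{C}(P(n,k))}(a(c)-b(c))=d(P(n,k))$, which proves the theorem.

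All the content sits in the first step, the bookkeeping behind $\lvert\so(c)\rvert=n+\frac{1}{2}(a(c)-b(c))$; once that is established, the theorem is immediate. I do not anticipate a genuine obstacle --- the only things to keep an eye on are that non-triviality is needed twice (to apply Lemma~\ref{lem1}(a) and to know $\so(c)$ is eligible for the minimum defining $d(P(n,k))$), and that the counting argument also delivers the parity fact that $a(c)-b(c)$ is even, which the formula implicitly uses.
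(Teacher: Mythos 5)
Your proof is correct and follows essentially the same route as the paper: you derive the identity $\lvert\so(c)\rvert = n + \frac{a(c)-b(c)}{2}$ (the paper's Equation~\eqref{socardi}) and then combine it with Lemma~\ref{lem1}(a) and Corollary~\ref{cor:new}, exactly as the paper does, merely phrasing the final step as a direct inequality rather than a contradiction. No gaps.
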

\begin{proof}
Consider an arbitrary cover $c$ of $P(n,k)$. Suppose there are $s$
strips in $c$ with the sizes $m_1,m_2,\dots,m_s$. Without loss of
generality, we assume the first $b(c)$ strips in this list are all
odd. In $\so(c)$, there are $a(c)=\sum_{i=1}^s m_i$ vertices
selected from $V$ and $n-a(c)+\sum_{i=1}^s \lceil
\frac{m_i-1}{2}\rceil$ vertices selected from $U$ (the last term
of this expression shows the number of vertices selected from the
twins of strips of $c$) are selected in $\so(c)$. Thus, we obtain
\begin{align}
\lvert \so(c)\rvert&=a(c)+n-a(c)+\sum_{i=1}^s
\left\lceil\cfrac{m_i-1}{2}\right\rceil=n+\sum_{i=1}^s
\left\lceil\cfrac{m_i-1}{2}\right\rceil\notag\\
&=n+\sum_{i=1}^{b(c)}
\left\lceil\cfrac{m_i-1}{2}\right\rceil+\sum_{i=b(c)+1}^s
\left\lceil\cfrac{m_i-1}{2}\right\rceil=n+\sum_{i=1}^{b(c)}
\cfrac{m_i-1}{2}+\sum_{i=b(c)+1}^s\cfrac{m_i}{2}\notag\\
\label{socardi}&=n+\sum_{i=1}^s \cfrac{m_i}{2}-\cfrac{b(c)}{2}=n+
\cfrac{a(c)-b(c)}{2}.
\end{align}
Now, take the contradictory assumption that $d(P(n,k)) <
a(c^*)-b(c^*)$. Clearly, a cover exists that
$a(c)-b(c)=d(P(n,k))$. Therefore, by applying
Equation~\eqref{socardi} we have,
\begin{align*}
a(c)-b(c)<a(c^*)-b(c^*)&\Rightarrow
n+\cfrac{a(c)-b(c)}{2}<n+\cfrac{a(c^*)-b(c^*)}{2}\\
&\Rightarrow\lvert \so(c)\rvert < \lvert \so(c^*)\rvert.
\end{align*}
From Corollary~\ref{cor:new}, we reach the contradictory result that
$\lvert \so(c)\rvert < \lvert c^*\rvert$ where $so(c)$ is a cover by
Lemma~\ref{lem1}(a). By the definition of $d(P(n,k))$, $d(P(n,k))\le
a(c^*)-b(c^*)$. Thus, for every minimum cover $c^*$, we must have
$a(c^*)-b(c^*)=d(P(n,k))$.
\end{proof}

\begin{remarka}
Note that the converse of Theorem~\ref{thm1} is not necessarily true
and a cover c with $a(c) - b(c) = d(P(n,k))$ may not be a minimum
one. This fact is clear, because $d(P(n,k))$ imposes a constraint
only for the selections from $V$ and not on all vertices.
\end{remarka}

Now we use semi-optimal covers to characterize minimum covers in
generalized Petersen graphs.
\begin{proposition}\label{thm2}
In a minimum cover $c^*$ of $P(n,k)$, the maximum size of a strip is
at most $2k$ when $k$ is odd, and $2k+1$ when $k$ is even.
\end{proposition}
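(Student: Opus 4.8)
The plan is to argue by contradiction: suppose a minimum cover $c^*$ has a strip $S$ of size $m$ that is too large, and produce a cover of strictly smaller size, contradicting Corollary~\ref{cor:new} together with Theorem~\ref{thm1}. By Corollary~\ref{cor:new} we may replace $c^*$ by $\so(c^*)$ without changing its size, so I will assume from the outset that $c^*$ is semi-optimal and hence, by Equation~\eqref{socardi}, that $|c^*| = n + \frac{a(c^*)-b(c^*)}{2}$. Thus it suffices, given a long strip, to build a new cover $c'$ with $a(c') - b(c') < a(c^*) - b(c^*)$, since then $|\so(c')| < |c^*|$, and $\so(c')$ is a genuine cover by Lemma~\ref{lem1}(a).

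The key local move is the following. Let $S = \{v_{i+1}, v_{i+2}, \dots, v_{i+m}\}$ be a strip of $c^*$ with $v_i, v_{i+m+1} \notin c^*$. I want to remove a $V$-vertex from the middle of $S$ and repair the cover cheaply. The obstruction to just deleting $v_j$ (for some $i+1 \le j \le i+m$) is that the two $V$-edges $v_{j-k}v_j$ and $v_j v_{j+k}$ must still be covered; so I must also add $v_{j-k}$ and $v_{j+k}$ if they are not already selected, and I must be sure the spoke $u_j v_j$ is still covered, which is automatic once $\so$ is reapplied since $u_j$ becomes a twin of an unselected vertex. The point of the bound $2k$ (resp.\ $2k+1$) is this: if the strip is long enough, I can choose $j$ so that the two indices $j-k$ and $j+k$ both fall \emph{inside} $S$ (indeed $j-k, j+k \in \{i+1,\dots,i+m\}$ requires roughly $m \ge 2k+1$), so they are already selected by $c^*$ and no new $V$-vertices are needed. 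Deleting the single interior vertex $v_j$ then splits $S$ into two substrips of sizes $p = j-i-1$ and $q = i+m-j$ with $p+q = m-1$; the change in $a$ is $-1$, and the change in $b$ — the number of odd strips — I will control by choosing the parity of $j$ so that the split is as favorable as possible. A short parity case-analysis (on whether $m$ is even or odd, and whether $k$ is even or odd) shows that one can always pick $j$ with $j-k, j+k$ interior to $S$ such that $a - b$ strictly decreases: e.g.\ if $m$ is even we can arrange both new substrips to have the opposite parity to before so $b$ drops by at most a controlled amount while $a$ drops by $1$; the tight case is exactly $m = 2k$ for odd $k$ and $m = 2k+1$ for even $k$, where there is just barely no valid interior choice of $j$, which is why those are the stated thresholds.

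The main obstacle, and the step I expect to require the most care, is the parity bookkeeping in the splitting move: when $v_j$ is deleted, the two resulting pieces have sizes $p$ and $m-1-p$, and whether each is odd depends on the parity of $p$, while the \emph{constraint} that $j \pm k$ lie inside $S$ restricts which values of $p$ are available; when $k$ is even versus odd this restriction interacts differently with the parity of $p$, which is precisely the source of the $2k$ versus $2k+1$ dichotomy. I would organize this as: (1) set up the semi-optimal reduction via Corollary~\ref{cor:new}; (2) state the splitting lemma "if $m \ge 2k+1$ ($k$ odd) or $m \ge 2k+2$ ($k$ even) then some interior $j$ with $j\pm k \in S$ can be chosen so that $a-b$ strictly decreases"; (3) verify the parity arithmetic in the two cases on $k$; (4) conclude by Theorem~\ref{thm1} that such a $c^*$ cannot be minimum. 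One boundary subtlety to watch: if deleting $v_j$ happens to leave $v_{j-1}$ or $v_{j+1}$ as a size-$1$ piece adjacent to the gap, that is fine — it is still a legitimate strip — but one must also double-check the edge $u_{j-1}u_j$ and $u_j u_{j+1}$ after reapplying $\so$, which is handled uniformly by Lemma~\ref{lem1}(a) since we always pass to $\so(c')$ at the end.
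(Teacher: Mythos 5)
Your strategy is sound, and the parity check you deferred does go through: with the strip $\{v_{i+1},\dots,v_{i+m}\}$, for odd $k$ and $m\ge 2k+1$ the choice $j=i+k+1$ works for either parity of $m$ (if $m$ is odd both pieces are odd, if $m$ is even exactly one is, and in both cases $a-b$ drops by $2$, so $\lvert \so(c')\rvert=\lvert c^*\rvert-1$ by Equation~\eqref{socardi}); for even $k$, any admissible $j$ works when $m\ge 2k+2$ is even, and $j=i+k+2$ works when $m$ is odd, the single excluded case being exactly $m=2k+1$ with $k$ even, as you predicted. Two cautions: this case-analysis is the heart of your argument and is only asserted, not carried out, in your write-up; and before invoking Lemma~\ref{lem1}(a) you should make $c'$ an honest cover (e.g.\ add $u_j$ when you delete $v_j$, or observe that the $\so$ construction only needs the $V$-edges to be covered), since that lemma is stated for covers. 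The paper rests on the same key observation---an interior vertex of an over-long strip has both of its $V$-neighbors inside the strip---but certifies the size drop far more directly: in $\so(c^*)$ the twins of the strip $v_i,\dots,v_{i+m-1}$ are chosen alternately starting with $u_{i+1}$, so for odd $k$ the vertex $u_{i+k}$ is already selected; hence all three neighbors $u_{i+k}$, $v_i$, $v_{i+2k}$ of $v_{i+k}$ lie in $\so(c^*)$, making $v_{i+k}$ redundant and yielding a cover of size $\lvert c^*\rvert-1$, contradicting Corollary~\ref{cor:new} and minimality (for even $k$ one uses $v_{i+k+1}$ instead). This shortcut avoids Theorem~\ref{thm1} and all bookkeeping about how the strip splits; your route is more flexible (any interior deletion of the right parity works, and for even $m$ every admissible $j$ works), but at the cost of the case-analysis you flagged.
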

\begin{proof}
First, we prove the case where $k$ is odd. Let $m$ be the maximum
size of strips of $c^*$ and suppose for a contradiction that $m >
2k$. We show vertices of this strip by
$v_i,v_{i+1},\dots,v_{i+m-1}$. Since $u_{i+k}$, $v_i$, and
$v_{i+2k}$ are selected by $\so(c^*)$, there is no need for
selection of $v_{i+k}$ in $\so(c^*)$, which is in contradiction
with Corollary~\ref{cor:new}. For even $k$, the process of the
proof is similar.
\end{proof}

The following lemma is trivial.

\begin{lemma}\label{lem3}
In a minimum cover $c^*$ of a generalized Petersen graph $P(n,k)$,
if a vertex $v\in V$ and its two adjacent vertices in $V$ are
selected in $c^*$, then there exists another minimum cover $c$ of
$P(n,k)$ that has the same selection from $V$ as $c^*$ with one
exception that it does not select $v$.
\end{lemma}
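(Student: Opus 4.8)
The plan is to produce the cover $c$ explicitly by a single local swap. Write $v=v_i$. Since the $V$-edges of $P(n,k)$ are exactly the edges $v_jv_{j+k}$, the two neighbours of $v_i$ in $V$ are $v_{i-k}$ and $v_{i+k}$, and by hypothesis both of them are selected by $c^*$ (note that $v_{i-k},v_i,v_{i+k}$ are pairwise distinct because $0<k$ and $2k<n$). Define $c$ to be the cover-candidate on $U\cup V$ that coincides with $c^*$ everywhere except that $c$ does \emph{not} select $v_i$ but \emph{does} select $u_i$.

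First I would verify that $c$ is indeed a cover. Every edge of $P(n,k)$ not incident with $v_i$ has the same two endpoints under $c$ as under $c^*$, so it remains covered. The edges incident with $v_i$ are the spoke $u_iv_i$ and the two $V$-edges $v_{i-k}v_i$ and $v_iv_{i+k}$: the spoke is covered because $u_i\in c$, and the two $V$-edges are covered because $v_{i-k},v_{i+k}\in c$, these vertices being untouched by the swap. Hence $c$ is a cover, and since $v_i\notin c$ it is non-trivial, so $c\in\mathcal{C}(P(n,k))$.

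Next I would compare sizes. By construction $\lvert c\rvert\le\lvert c^*\rvert-1+1=\lvert c^*\rvert$, with equality precisely when $u_i$ was not already selected by $c^*$. Since $c^*$ is a minimum cover and $c$ is a cover, $\lvert c\rvert\ge\lvert c^*\rvert$; combining the two inequalities gives $\lvert c\rvert=\lvert c^*\rvert$, so $c$ is again a minimum cover and, moreover, $u_i\notin c^*$. Thus $c$ is obtained from $c^*$ by deleting $v_i$ and inserting its twin $u_i$, and in particular $c\cap V=(c^*\cap V)\setminus\{v\}$, which is exactly the assertion of the lemma.

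As the paper notes, this statement is essentially immediate, so I do not expect a genuine obstacle. The one point worth a sentence is the claim that $u_i$ can be added ``for free'': this is not an extra hypothesis but a consequence of the minimality of $c^*$, since if $u_i$ were already in $c^*$ then $c^*\setminus\{v_i\}$ would be a strictly smaller cover, contradicting the minimality of $c^*$.
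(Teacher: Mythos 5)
Your proof is correct: the twin-swap $c=(c^*\setminus\{v_i\})\cup\{u_i\}$, together with the minimality argument showing $u_i\notin c^*$ and $\lvert c\rvert=\lvert c^*\rvert$, is exactly the argument the paper leaves implicit when it declares the lemma trivial. Nothing is missing, and the distinctness and non-triviality remarks are appropriate bookkeeping.
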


\begin{theorem}\label{thm3}
There is a minimum cover of $P(n,k)$ such that the maximum size of
its strips is at most $k+1$ when $k$ is odd, and $k+2$ when $k$ is
even.
\end{theorem}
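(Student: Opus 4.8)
The plan is to start from a minimum cover $c^*$ and, using Corollary~\ref{cor:new}, work with the semi-optimal cover $\so(c^*)$, which has the same size. By Proposition~\ref{thm2} we already know every strip has size at most $2k$ (or $2k+1$), so the task is to show that any strip that is ``too long'' — of size more than $k+1$ (resp.\ $k+2$) — can be broken up, or shrunk, by a local modification that does not increase $|c^*|$ and does not destroy the covering property. The quantity to watch, in view of Theorem~\ref{thm1}, is $a(c)-b(c)$: a modification that preserves this quantity automatically preserves optimality once we re-apply $\so(\cdot)$, so I would try to make every modification preserve $a(c)-b(c)$ (or decrease it, which is even better and contradicts minimality if strict).

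Concretely, suppose $k$ is odd and $S=\{v_i,v_{i+1},\dots,v_{i+m-1}\}$ is a strip of $\so(c^*)$ with $m \ge k+2$. The key observation, exactly as in the proof of Proposition~\ref{thm2}, is that since $v_j$ and $v_{j+k}$ are both in $S$ for suitable $j$, and since $\so(c^*)$ also selects $u_{j+k}$ (or can be arranged to, by the optimal-selection pattern), the vertex $v_{j+k}$ sits between two selected $V$-vertices and has its spoke covered from the $U$-side; hence removing $v_{j+k}$ from the cover keeps every edge covered. I would remove such an interior vertex, splitting $S$ into two shorter strips $S_1$ and $S_2$ with $|S_1|+|S_2| = m-1$. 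The effect on $a$ is $a \to a-1$; the effect on $b$ (number of odd strips) is: we lose one strip of parity $m \bmod 2$ and gain two strips whose sizes sum to $m-1$, so their parities are opposite, contributing exactly $1$ odd strip between them. A short parity case-check shows $a-b$ either stays the same or decreases, never increases; re-applying $\so(\cdot)$ then gives a cover of size $\le |c^*|$, i.e.\ still minimum. Iterating (the total size $\sum m_i$ strictly drops each time, so the process terminates) drives every strip down to size $\le k+1$. For even $k$ the bound from Proposition~\ref{thm2} is $2k+1$ and the target is $k+2$; the argument is the same, with the parity bookkeeping shifted by one.

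One technical point I would be careful about: after deleting $v_{j+k}$ I must make sure the $U$-side is re-optimized so that the two new sub-strips get the minimum number of twin-vertices — this is precisely what passing to $\so(\cdot)$ of the modified cover does, and Lemma~\ref{lem1} guarantees the resulting object is still a cover of size not exceeding the modified one. I also need $j+k$ to genuinely lie strictly inside $S$ (not at an endpoint), which is where the hypothesis $m \ge k+2$ rather than $m \ge k+1$ is used: with $m \ge k+2$ there is room for $v_j, v_{j+k}$ both interior enough that the split produces two nonempty strips and the deleted vertex is not adjacent to a gap. Finally, I should double-check the edge case where the strip wraps almost all the way around the cycle, i.e.\ when $m$ is close to $n$; but Proposition~\ref{thm2} already caps $m$ at $2k+1 < n$, so no strip is the whole cycle and ``interior'' is well-defined.

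The main obstacle, I expect, is the parity bookkeeping in the step that controls $b(c)$: one has to verify in every combination of parities of $m$ and of the split point that $a(c)-b(c)$ does not go up, and also handle the degenerate sub-case where one of the two pieces has size $0$ (which happens if $v_{j+k}$ is adjacent to an endpoint of $S$) — in that degenerate case the deletion just shortens the strip by one at its end rather than splitting it, and one re-runs the bookkeeping. Getting all of these cases to line up, while simultaneously keeping the cover property via $\so(\cdot)$, is the delicate part; the rest is a routine induction on $\sum_i m_i$.
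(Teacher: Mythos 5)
Your key step fails. You propose to delete $v_{j+k}$ from a strip $S$ of size $m\ge k+2$, arguing ``exactly as in the proof of Proposition~\ref{thm2}'' that all edges stay covered. But that argument needs all three of $u_{j+k}$, $v_j$, \emph{and} $v_{j+2k}$ to be selected: the edges at $v_{j+k}$ are the spoke and the two $V$-edges $v_jv_{j+k}$ and $v_{j+k}v_{j+2k}$ (its $V$-neighbours are $v_j$ and $v_{j+2k}$, not its strip-neighbours $v_{j+k\pm1}$, so ``sits between two selected $V$-vertices'' because it is interior to $S$ is a non sequitur). In Proposition~\ref{thm2} the strip has size $>2k$, so $v_j$, $v_{j+k}$, $v_{j+2k}$ all lie in it; here $m\le 2k$, so $v_{j+2k}$ lies outside $S$ and need not be selected --- and in fact it cannot be: since $u_{j+k}$ is selected by the optimal pattern (for odd $k$), if $v_{j+2k}$ were also selected, deleting $v_{j+k}$ would produce a cover of size $|c^*|-1$, contradicting minimality. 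So the deletion you rely on can never legitimately be performed, and with it collapses your parity bookkeeping and your termination argument (which needs $\sum m_i$ to strictly drop).

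The paper's proof avoids this by a \emph{swap} rather than a deletion, combined with an extremal choice: take a minimum cover minimizing first the maximum strip size and then the number of maximum strips. Using Lemma~\ref{lem3} it first shows $v_{i+2k}$ and $v_{i+2k+1}$ are unselected (otherwise $v_{i+k}$ could be dropped in favour of its twin, splitting the strip and contradicting the extremal choice); then it replaces $v_{i+k}$ by $v_{i+2k}$ in $\so(c^*)$, which stays a cover of the same size because $v_i$ and $u_{i+k}$ are selected, destroys the strip of size $m$, and creates only strips of size $<m$ --- contradicting the extremal choice. If you want to keep the spirit of your approach, you must handle the case ``$v_{j+2k}$ unselected'' by such an exchange (size-preserving, not size-decreasing) and replace your induction on $\sum m_i$ by a potential that the swap actually decreases, e.g.\ the lexicographic pair (maximum strip size, number of maximum strips) used in the paper.
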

\begin{proof}
Let $\mathcal{C'}(P(n,k))$ be the set of covers such that the
maximum size of their strips is smallest among all minimum
covers, and let $c^*$ be a cover in $\mathcal{C'}(P(n,k))$ with
the smallest number of maximum strips. Let $m$ be the maximum
size of strips of $c^*$. We claim that $c^*$ is a minimum cover
satisfying desired condition of the theorem.

First, we prove the theorem when $k$ is odd. By contradiction
suppose that $m > k+1$. Let the vertices of a strip with the
maximum size be $v_i,v_{i+1},\dots,v_{i+m-1}$. The vertex
$v_{i+2k}$ cannot be selected by $c^*$, otherwise if it is
selected, then the vertex $v_{i+k}$ satisfies the conditions of
Lemma~\ref{lem3}, and it means that there is another minimum
cover $c'$ that agrees with $c^*$ in choosing from $V$, except on
$v_{i+k}$. Therefore, in $c'$, the stated strip is divided into
two smaller ones and this contradicts the way we have chosen
$c^*$ among all covers. Similarly, $v_{i+2k+1}$ must not be
selected in $c^*$, also.

Now, we construct another minimum cover $c$ from $\so(c^*)$ in the
following way: $c$ selects the same vertices as $\so(c^*)$ except
$v_{i+2k}$ instead of $v_{i+k}$. Since $v_i$ and $u_{i+k}$ are
selected by $\so(c^*)$, $c$ is a cover. In addition, the strip with
size $m$ is destroyed in $c$ and we claim that the new strip has
size less than $m$. This contradicts the way we have chosen $c^*$.
To show the claim above, notice that $v_{i+m}$ and $v_{i+2k+1}$ are
not selected in $c$. Therefore, $v_{i+2k}$ is constructed a strip
with the size at most $2k-m$, which is less than $m$.

In the case that $k$ is even, the process of proof is similar.
\end{proof}

\section{A lower bound and some upper bounds}
In this section, we present some bounds for $\beta(P(n,k))$.
First, we introduce a lower bound.
\begin{proposition}\label{thm4}
If $n$ is odd then we have $\beta(P(n,k))\ge n+ \frac {(n,k)+1}{2}$.
\end{proposition}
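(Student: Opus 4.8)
The plan is to work with a minimum cover $c^*$ and use the identity from Theorem~\ref{thm1}, namely $|c^*| = |\so(c^*)| = n + \frac{a(c^*) - b(c^*)}{2}$, so that the task reduces to proving $a(c^*) - b(c^*) \ge (n,k) + 1$ whenever $n$ is odd. Here $a(c^*)$ is the number of $V$-vertices selected and $b(c^*)$ the number of odd strips. Since $a(c^*) - b(c^*) = \sum_i (m_i - [m_i \text{ odd}])$ over the strips of sizes $m_1, \dots, m_s$, and each term $m_i - [m_i\text{ odd}]$ equals $2\lfloor m_i/2 \rfloor$, it suffices to show $\sum_i \lfloor m_i/2 \rfloor \ge \frac{(n,k)+1}{2}$; in particular we need the left side to be at least $1$, i.e. at least one strip of size $\ge 2$, and then to sharpen this to the full bound using the arithmetic of $n$ and $k$.

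First I would dispose of the degenerate possibilities. If $c^*$ selected no $V$-vertex at all, then every spoke would force all of $U$ into the cover, but then no $V$-edge is covered (each $V$-edge $v_iv_{i+k}$ has both ends outside the cover) — contradiction, so $a(c^*) \ge 1$. Next, suppose every strip of $c^*$ has size exactly $1$, i.e. no two consecutive $V$-vertices are ever both chosen. Consider the $V$-edges $v_i v_{i+k}$: since $(n,k) = g$, the $V$-vertices split into $g$ cycles $C_0, \dots, C_{g-1}$ under the step-$k$ adjacency, each of length $n/g$. On each such cycle $C_j$, the selected $V$-vertices must form a vertex cover of that cycle (every $V$-edge lies in some $C_j$ and must be covered by a $V$-vertex, since its $U$-endpoints don't exist). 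A cycle of length $n/g$ has $\beta = \lceil (n/g)/2 \rceil$; as $n$ is odd, $n/g$ is odd, so $\beta(C_j) = (n/g + 1)/2 > (n/g)/2$, meaning the chosen $V$-vertices on $C_j$ are "denser than alternating." The key combinatorial point is to translate this density statement back to the cyclic order $v_1, \dots, v_n$: one shows that if within each residue class $C_j$ more than half the vertices are chosen, then somewhere two cyclically consecutive $V$-vertices $v_i, v_{i+1}$ are both chosen, contradicting the assumption that all strips have size $1$. (A clean way: total selected $= \sum_j \beta(C_j) = g\cdot\frac{n/g+1}{2} = \frac{n+g}{2} > n/2$, but a "size-1-strips-only" selection is an independent set in the cycle $v_1\cdots v_n$, hence has at most $\lfloor n/2\rfloor$ vertices — contradiction.) So some strip has size $\ge 2$, giving $\sum_i \lfloor m_i/2\rfloor \ge 1$ and hence $\beta(P(n,k)) \ge n+1$.

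To get the full $\frac{(n,k)+1}{2}$ I would push the same counting harder: in $\so(c^*)$ the selected $V$-vertices, restricted to each cycle $C_j$, still must cover all edges of $C_j$, so $a(c^*) = \sum_j |c^* \cap C_j| \ge \sum_j \beta(C_j) = g \cdot \frac{n/g+1}{2} = \frac{n+g}{2}$. On the other hand $b(c^*)$, the number of odd strips, is at most the number of strips, and I want an upper bound on it in terms of $a(c^*)$ and the structure. The cleanest route is: $a(c^*) - b(c^*) = 2\sum_i \lfloor m_i/2\rfloor \ge 2\lfloor a(c^*)/s \rfloor$-type estimates are too weak, so instead note $\sum_i (m_i - 1) = a(c^*) - s$ where $s$ is the number of strips, and the number of $U$-edges inside strip-twins that must be covered equals $a(c^*) - s$; meanwhile $b(c^*) \le s$. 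Combining, $a(c^*) - b(c^*) \ge a(c^*) - s$. The main obstacle — and where I'd spend the real effort — is bounding $s$ (the number of strips, equivalently the number of maximal "gaps" of unselected $V$-vertices) from above. Here I would argue that each gap between consecutive strips, being a run of unselected $v$'s, corresponds via the step-$k$ structure to a constraint; using that within each $C_j$ the selected set has size $\frac{n/g+1}{2}$ and the $C_j$'s interleave in the cyclic order, one derives $s \le \frac{n - g}{2}$ or the equivalent bound that yields $a(c^*) - b(c^*) \ge \frac{n+g}{2} - \frac{n-g}{2} = g$, and then an integrality/parity correction (all $m_i$ summing to an odd-compatible total, with $g$ forced to have the right parity since $g \mid n$ and $n$ is odd so $g$ is odd) upgrades $g$ to $g+1$, finishing with $\beta(P(n,k)) = n + \frac{a(c^*)-b(c^*)}{2} \ge n + \frac{g+1}{2}$.
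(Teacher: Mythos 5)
Your overall strategy is legitimate but, as written, it contains a genuine gap exactly at the step you yourself flag as ``the main obstacle'': the inequality $s\le\frac{n-g}{2}$ (with $g=(n,k)$ and $s$ the number of strips) is only asserted, and the justification you sketch for it --- that the gaps between strips ``correspond via the step-$k$ structure to a constraint'' and that the cycles $C_j$ ``interleave in the cyclic order'' --- is not an argument; nothing about the interleaving of the residue classes is actually used or needed. Without that bound your chain $a(c^*)-b(c^*)\ge a(c^*)-s\ge g$ does not close, so the proposal does not yet prove the proposition. (The bound is in fact true and easy: for a non-trivial cover the strips are separated cyclically by maximal runs of unselected $V$-vertices, one run per strip, so $s\le n-a(c^*)$, and since each cycle $C_j$ has odd length $n/g$ and must be covered by $V$-vertices alone, $a(c^*)\ge\frac{n+g}{2}$, whence $s\le\frac{n-g}{2}$. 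With that inserted, your parity step --- $a(c^*)-b(c^*)=\sum_i 2\lfloor m_i/2\rfloor$ is even while $g$ is odd because $g\mid n$ and $n$ is odd --- correctly upgrades $g$ to $g+1$, and the identity $\beta(P(n,k))=n+\frac{a(c^*)-b(c^*)}{2}$ from Corollary~\ref{cor:new} and Equation~\eqref{socardi} finishes the proof. You should also note explicitly that a minimum cover is non-trivial, since $s\le n-a(c^*)$ fails when all of $V$ is selected.)

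Even after the repair, be aware that the machinery of Theorem~\ref{thm1} is unnecessary here: the paper's proof is a two-line disjoint count. $U$-edges can only be covered by $U$-vertices, and $U$ induces an odd cycle, forcing at least $\frac{n+1}{2}$ vertices of $U$; $V$-edges can only be covered by $V$-vertices, and $V$ induces $(n,k)$ odd cycles of length $\frac{n}{(n,k)}$, forcing at least $\frac{n+(n,k)}{2}$ vertices of $V$; adding gives $n+\frac{(n,k)+1}{2}$. Your route essentially re-derives the $U$-side contribution through the strip/semi-optimal formalism (which is where the bound on $s$ sneaks in), so the extra effort buys nothing for this particular lower bound.
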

\begin{proof}
First note that any vertex cover must select at least $\lceil
\frac{n}{2} \rceil = \frac{n+1}{2}$ vertices from $U$. Second, also
we note that the vertices of $V$ are in $(n,k)$ cycles of size
$\frac{n}{(n,k)}$, and a vertex cover must choose at least $\lceil
\frac{n}{(n,k)}/ 2 \rceil$ vertices from each of these cycles, so at
least $(n,k)\lceil \frac{n}{(n,k)}/ 2 \rceil= (n,k)
(\frac{n}{(n,k)}+1)/ 2 = \frac{n+(n,k)}{2}$ vertices of $V$ must be
included in any vertex cover. This shows that when $n$ is odd, a
vertex cover of $P(n,k)$ has at least $\frac{n+1}{2} +
\frac{n+(n,k)}{2} = n+\frac{(n,k)+1}{2}$ vertices.
\end{proof}
\begin{corollary}\label{cor:oddnbasic}
For all odd $n$, we have $\beta(P(n,k))\ge n + 1$.
\end{corollary}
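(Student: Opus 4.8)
The plan is to derive this immediately from Proposition~\ref{thm4}, which already gives the sharper bound $\beta(P(n,k)) \ge n + \frac{(n,k)+1}{2}$ for every odd $n$. The only thing needed is the trivial observation that the greatest common divisor $(n,k)$ is a positive integer, so $(n,k) \ge 1$, and therefore $\frac{(n,k)+1}{2} \ge 1$. Chaining these two inequalities yields $\beta(P(n,k)) \ge n + 1$ for all odd $n$ and all admissible $k$, which is exactly the claim.

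Concretely, the single step I would carry out is: apply Proposition~\ref{thm4} to obtain $\beta(P(n,k)) \ge n + \frac{(n,k)+1}{2}$; then substitute $(n,k) \ge 1$ to conclude $n + \frac{(n,k)+1}{2} \ge n + \frac{1+1}{2} = n + 1$. No further combinatorial analysis of covers, strips, or semi-optimal constructions is required here, since all of that work has already been done in the proof of Proposition~\ref{thm4}.

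There is essentially no obstacle in this proof — it is a one-line specialization of a result proved just above. The only thing worth flagging is that the bound $n+1$ is stated separately (rather than always quoting the full Proposition~\ref{thm4}) because it is the clean, uniform lower bound that will be convenient to invoke later, e.g.\ when comparing against the upper bounds in Section~3 and when establishing exact values in Section~4; in particular it matches the conjectured upper bound $n + \lceil n/5 \rceil$ in the regime where the ceiling term is forced to be at least $1$, and it is attained (for instance for $P(n,2)$ with suitable odd $n$, where $(n,k)=1$), so no stronger universal statement of this form is possible.
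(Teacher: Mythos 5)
Your proposal is correct and is exactly how the paper obtains this corollary: it is stated as an immediate consequence of Proposition~\ref{thm4}, using nothing more than $(n,k)\ge 1$.
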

Next, we find some upper bounds. The following simple bounds are
the results of Theorem~\ref{thm1}.
\begin{proposition}\label{cor1}
We have:
\begin{enumerate}
    \item[(a)]
    if $\frac{n}{(n,k)}$ is odd, then $\beta (P(n,k)) \leq
    n+\frac{n+(n,k)}{4}$, and
    \item[(b)]
    if $\frac{n}{(n,k)}$ is even, then $\beta (P(n,k))
    \leq n+\frac{n}{4}$.
\end{enumerate}
\end{proposition}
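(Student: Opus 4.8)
The plan is to obtain both inequalities directly from Equation~\eqref{socardi} in the proof of Theorem~\ref{thm1} --- every cover $c$ satisfies $\lvert\so(c)\rvert = n + \tfrac12\bigl(a(c)-b(c)\bigr)$ --- so that bounding $\beta(P(n,k))$ from above reduces to exhibiting a single cover whose $V$-part is small.

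First I would isolate the estimate we need. If $c\in\mathcal C(P(n,k))$, then $\so(c)$ is again a cover by Lemma~\ref{lem1}(a), and Equation~\eqref{socardi} gives $\lvert\so(c)\rvert = n + \tfrac12\bigl(a(c)-b(c)\bigr)$. Since $b(c)\ge 0$, this yields
\[
\beta(P(n,k)) \;\le\; \lvert\so(c)\rvert \;\le\; n + \tfrac12\,a(c)
\]
for every cover $c$. So it remains to construct a non-trivial cover $c$ whose number $a(c)$ of selected $V$-vertices equals the quantity forced by each case.

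Second I would examine the subgraph $H$ of $P(n,k)$ induced on $V$. Its edges are precisely the $V$-edges $v_iv_{i+k}$, so $H$ is $2$-regular and decomposes into the $(n,k)$ cycles obtained from the orbits of $i\mapsto i+k$ on the index set, each of length $\tfrac{n}{(n,k)}$ --- the same decomposition used in the proof of Proposition~\ref{thm4}. A minimum vertex cover of a cycle $C_\ell$ has $\lceil\ell/2\rceil$ vertices, so $H$ has a vertex cover $T$ with $\lvert T\rvert = (n,k)\,\bigl\lceil \tfrac{n/(n,k)}{2}\bigr\rceil$, which equals $\tfrac{n+(n,k)}{2}$ when $\tfrac{n}{(n,k)}$ is odd and $\tfrac{n}{2}$ when $\tfrac{n}{(n,k)}$ is even. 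Now set $c:=T\cup U$. Then $c$ covers all $V$-edges (by the choice of $T$) and all spokes and $U$-edges (as all of $U$ is chosen), so $c$ is a cover; and since $\bigl\lceil \tfrac{n/(n,k)}{2}\bigr\rceil < \tfrac{n}{(n,k)}$ --- using $\tfrac{n}{(n,k)}>2$, which holds because $n>2k\ge 2(n,k)$ --- we have $\lvert T\rvert<n$, so $c$ is non-trivial and the estimate above applies to it. Feeding $a(c)=\lvert T\rvert$ into the displayed inequality gives $\beta(P(n,k))\le n+\tfrac14\bigl(n+(n,k)\bigr)$ in case (a) and $\beta(P(n,k))\le n+\tfrac n4$ in case (b).

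There is no genuinely hard step here; the only point requiring care is that the cover used must be non-trivial for Lemma~\ref{lem1}(a) and Equation~\eqref{socardi} to be applicable, which is exactly why one verifies $\lvert T\rvert<n$. One could obtain a slightly stronger bound by choosing $T$ so as to create as many odd strips as possible (making $b(c)$ large), but the inequalities as stated already follow from the trivial bound $b(c)\ge 0$.
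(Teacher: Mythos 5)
Your proposal is correct and is essentially the paper's own argument: the paper likewise takes the cover consisting of all of $U$ together with $\lceil\frac{n/(n,k)}{2}\rceil$ alternately chosen vertices on each of the $(n,k)$ cycles of $V$, and applies Equation~\eqref{socardi} with the crude bound $b(c)\ge 0$. Your explicit check that the cover is non-trivial (so that Lemma~\ref{lem1} and Equation~\eqref{socardi} apply) is a small point the paper leaves implicit, but the route is the same.
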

\begin{proof}
(a) We find an upper bound by introducing a semi-optimal cover of
$P(n,k)$. Since $V$ consists of $(n,k)$ pairwise disjoint cycles of
length $\frac{n}{(n,k)}$, we can cover all $V$-edges by selecting
alternatively $\lceil \frac{n}{(n,k)}/ 2 \rceil$ vertices from each
cycle. Now, let $c$ be a cover which consists of all vertices of $U$
and $\lceil \frac{n}{(n,k)}/ 2 \rceil$ vertices from each cycle of
$V$, chosen alternatively. By the Equation \eqref{socardi} in the
proof of Theorem~\ref{thm1}, we know that $\lvert \so(c)\rvert = n +
\frac{a(c)-b(c)}{2}$. Note that
\begin{equation*}
a(c) = (n,k) \left\lceil \frac{\frac{n}{(n,k)}}{2}\right\rceil
=(n,k)\frac{\frac{n}{(n,k)}+1}{2}= \frac{n+(n,k)}{2}.
\end{equation*}
Thus, we obtain
\begin{equation*}
\beta(P(n,k)) \le \lvert \so(c) \rvert=n+ \frac{a(c)-b(c)}{2} \leq
n+\frac{a(c)}{2} = n+\frac{n+(n,k)}{4}.
\end{equation*}

(b) The proof is similar to (a).
\end{proof}

We now present some more powerful upper bounds.

\begin{proposition}\label{thm5}
If both $n$ and $k$ are odd, then $\beta(P(n,k)) \le
n+\frac{k+1}{2}$.
\end{proposition}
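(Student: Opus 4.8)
The strategy is to build an explicit non-trivial cover $c$ of $P(n,k)$ and bound $\beta(P(n,k))$ by $|\so(c)|$ via Lemma~\ref{lem1}. By the identity \eqref{socardi} established in the proof of Theorem~\ref{thm1}, $|\so(c)| = n + \tfrac12(a(c)-b(c))$, so it suffices to exhibit a cover whose selection from $V$ satisfies $a(c)-b(c)=k+1$ (in fact $a(c)-b(c)\le k+1$ would already suffice). To see what to aim for, note from the same identity that a strip of size $1$ contributes $0$ to $a(c)-b(c)$, a strip of odd size $m$ contributes $m-1$, and a strip of even size $m$ contributes $m$. So I want a selection from $V$ that covers every $V$-edge, consists almost entirely of singleton strips, and has exactly one long strip, of size $k+1$ — which is even since $k$ is odd, hence contributes exactly $k+1$.

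Concretely, I would let the \emph{unselected} subset of $V$ be $T=\{v_{k+2},v_{k+4},v_{k+6},\dots,v_n\}$, an arithmetic progression of common difference $2$; this is legitimate because $n-(k+2)$ is even (both $n$ and $k$ are odd) and $k+2\le n$ (as $n>2k$). Then the selected set $S=V\setminus T$ is the union of the block $v_1,v_2,\dots,v_{k+1}$ and the isolated vertices $v_{k+3},v_{k+5},\dots,v_{n-1}$: that is, one strip of size $k+1$ together with $(n-k-2)/2$ strips of size $1$. I would take $c$ to be the cover that selects $S$ from $V$ and all of $U$; it covers all $U$-edges and spokes through $U$, and all $V$-edges through $S$, and it is non-trivial since $S\subsetneq V$.

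Two things then need to be checked. First, $S$ really covers every $V$-edge $v_iv_{i+k}$, i.e.\ no two members of $T$ are at circular subscript-distance $k$: if $a,b\in T$ with $a>b$ then $0<a-b\le n-k-2<n$ and $a-b$ is even, so $a-b$ can equal neither $k$ (which is odd) nor $(-k)\bmod n = n-k$ (which exceeds $n-k-2$); hence $a-b\not\equiv\pm k\pmod n$. Second, a short count: $a(c)=|S|=n-|T|=(n+k)/2$, and since the only odd strips are the $(n-k-2)/2$ singletons, $b(c)=(n-k-2)/2$, whence $a(c)-b(c)=k+1$. Combining, $\beta(P(n,k))\le|\so(c)| = n+\tfrac{k+1}{2}$.

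I do not anticipate a genuine difficulty; the one delicate point is choosing $T$ so that it is simultaneously an independent set in the ``step-$k$'' graph on $V$ \emph{and} yields the desired strip pattern in the linear subscript order. The naive ``take every other subscript'' choice for $T$ fails the independence requirement precisely because $n$ is odd (parity is not consistent around the $n$-cycle, and one must also exclude distance $n-k$, not merely $k$); starting the alternating pattern at $v_{k+2}$ and running it to $v_n$, while keeping the block $v_1,\dots,v_{k+1}$ intact, is exactly the needed correction. It is worth confirming separately that the construction still goes through in the extreme case $n=2k+1$ — there the long strip has size $k+1$ and there are $(k-1)/2$ singletons — but no change to the argument is required.
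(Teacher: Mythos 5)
Your proof is correct: the unselected set $T=\{v_{k+2},v_{k+4},\dots,v_n\}$ is indeed independent under steps $\pm k$ (your parity argument rules out distance $k$, and the wrap-around distance $n-k$ exceeds the maximal gap $n-k-2$), the strip structure is one even strip of size $k+1$ plus $\frac{n-k-2}{2}$ singletons, so $a(c)-b(c)=k+1$ and Lemma~\ref{lem1} with Equation~\eqref{socardi} gives $\beta(P(n,k))\le\lvert\so(c)\rvert=n+\frac{k+1}{2}$. The paper's route is more direct: it takes the parity cover ($u_i$ for odd $i$, $v_i$ for even $i$), observes that this already covers all edges except the $\frac{k+1}{2}$ $V$-edges $v_{n-i}v_{n-i+k}$, $i=0,2,\dots,k-1$, whose endpoints both have odd subscripts, and patches one endpoint of each, giving a cover of size $n+\frac{k+1}{2}$ without invoking the semi-optimal machinery at all. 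Your version trades that short direct verification for a design of the trace on $V$ alone, letting Lemma~\ref{lem1} and Equation~\eqref{socardi} supply the $U$-selection and the count; this makes the source of the $\frac{k+1}{2}$ term transparent (one even strip of length $k+1$) and in fact your $\so(c)$ essentially coincides with the paper's cover for the endpoint choice $v_1,v_3,\dots,v_k$. The cost is that you must check independence of $T$ in the circulant (including the $n-k$ case), which you do correctly; the benefit is a template that generalizes to other strip patterns, in the spirit of Theorem~\ref{thm1}.
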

\begin{proof}
Let $c$ select the vertices of $U$ with odd subscripts, and the
vertices of $V$ with even subscripts. It is easy to see that $c$
covers all $U$-edges and all spokes. Moreover, $c$ covers all
$V$-edges except the ones in which both endpoints have odd
subscripts. They are $\frac{k+1}{2}$ edges between the vertices
$v_{n-i}$ and $v_{n-i+k}$, for $i=0,2,4,\dots,k-1$. Now, if $c$
selects one endpoint of each of these edges, it becomes a cover of
size $n+\frac{k+1}{2}$.~\end{proof}

Now, we present a recursive upper bound in the following theorem. In
this theorem, we find a cover by applying a fixed pattern on
$m$-sectors of generalized Petersen graphs.

\begin{theorem}\label{thm6}
For a generalized Petersen graph $P(n,k)$, we have
\begin{enumerate}
    \item[(a)] if $m\mid n$, then $\beta(P(n,k))\le \frac{n}{m} \beta(P(m,r))$, and
    \item[(b)] if $m\nmid n$, then $\beta(P(n,k))\le \left\lfloor\frac{n}{m}\right
\rfloor \beta(P(m,r))+ 2k$
\end{enumerate}
where $m<k$ and $r\equiv k \pmod m$, $m > 2r >0$.
\end{theorem}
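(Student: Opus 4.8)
The idea is to construct explicit covers of $P(n,k)$: tile the index set by consecutive $m$-sectors, paste a fixed minimum cover $c_0$ of $P(m,r)$ onto each sector, and — in case (b) — brute-force two short index blocks to repair the places where the tiling fails to close up.

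I would first set notation. Put $q=\lfloor k/m\rfloor\ge 1$, so $k=qm+r$; here the hypotheses $m<k$, $r\equiv k\pmod m$ and $m>2r>0$ pin $r$ down to be $k\bmod m$ and force this residue to lie in $(0,m/2)$, so $P(m,r)$ is a genuine generalized Petersen graph with no loops. For an index $i$ write $[i]_m\in\{1,\dots,m\}$ for its residue mod $m$; this is the position of $u_i$ (equivalently $v_i$) inside its $m$-sector. Fix a minimum cover $c_0$ of $P(m,r)$, whose edges are $u_j u_{[j+1]_m}$, $u_j v_j$ and $v_j v_{[j+r]_m}$. For part (a), since $m\mid n$ I would split the indices into the $n/m$ sectors and define a cover $c$ of $P(n,k)$ by $c(u_i)=c_0(u_{[i]_m})$ and $c(v_i)=c_0(v_{[i]_m})$; each sector contributes $|c_0|$ selected vertices, so $|c|=\frac{n}{m}\,|c_0|=\frac{n}{m}\,\beta(P(m,r))$. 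To see $c$ is a cover: every spoke and $U$-edge (including the boundary edge $u_{tm+m}u_{tm+m+1}$, which becomes $u_m u_1$) maps onto a spoke or $U$-edge of $P(m,r)$ that $c_0$ covers; and a $V$-edge $v_i v_{[i+k]_n}$ maps onto $v_{[i]_m}v_{[[i]_m+r]_m}$, because $m\mid n$ makes reduction mod $m$ commute with reduction mod $n$, so the far endpoint has position $[i+k]_m=[[i]_m+r]_m$. That gives (a).

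For part (b) let $n_0:=n-m\lfloor n/m\rfloor\in\{1,\dots,m-1\}$ and $p:=\lfloor n/m\rfloor$. I would run the tiling pattern on the first $p$ full sectors — the indices $\{1,\dots,pm\}$, contributing exactly $p\,\beta(P(m,r))$ selected vertices — and then \emph{in addition} select every $u_i,v_i$ with $i$ in the leftover block $\{pm+1,\dots,n\}$ and every $u_i,v_i$ with $i\in\{1,\dots,k-n_0\}$. Overriding these blocks only converts unselected vertices into selected ones, adding at most $2n_0+2(k-n_0)=2k$ vertices, so $|c|\le p\,\beta(P(m,r))+2k$. For the cover check: any edge with an endpoint in one of the two fully-selected blocks is covered, so only edges lying entirely inside the ``pure pattern'' range $\{k-n_0+1,\dots,pm\}$ remain, and there the argument of (a) applies verbatim — the one extra point to verify is that a $V$-edge with both endpoints in that range cannot wrap past index $n$, since wrapping would force its left endpoint to have index $>pm$.

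The crux — and the reason part (b) must pay $2k$ — is exactly this wrap-around: once $m\nmid n$, reduction mod $n$ no longer commutes with reduction mod $m$, so across the seam where index $n$ meets index $1$ the tiling is off by $n_0$, and a whole band of $V$-edges straddling the seam would be left uncovered by the bare pattern. Choosing the second repair block to be precisely $\{1,\dots,k-n_0\}$ is calibrated so that (i) together with the leftover block it hits an endpoint of every seam-straddling $V$-edge, and (ii) the two blocks span $n_0+(k-n_0)=k$ indices, i.e.\ $2k$ vertices. Everything else is bookkeeping; the chain of inequalities $n_0<m<k<n/2$ (the last from $n>2k$) is what guarantees the two repair blocks fit inside $\{1,\dots,n\}$ and do not overlap.
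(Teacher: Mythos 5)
Your proof is correct and follows essentially the same route as the paper: tile the $\lfloor n/m\rfloor$ full $m$-sectors with the pattern coming from a minimum cover of $P(m,r)$, and in case (b) add at most $2k$ extra vertices to repair the seam left by the residual sector. The only cosmetic difference is the choice of repair set (the paper selects $u_{n-r'+1},\dots,u_n$ together with $v_{n-k+1},\dots,v_{n+k-r'}$, while you fully select the leftover block and the block $\{1,\dots,k-n_0\}$); both cost at most $2k$, and your check that the wrap-around $V$-edges are caught by the repair blocks is in fact more explicit than the paper's ``it is not hard to see''.
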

\begin{proof}
(a) In this case, we can partition $P(n,k)$ into $\frac{n}{m}$
sectors of size $m$. Now, we present a suitable pattern of selection
from an $m$-sector and apply this pattern on all $m$-sectors of
$P(n,k)$ such that selected vertices cover all edges of $P(n,k)$.
Since $r\equiv k\pmod m$, the $i$th vertex of an $m$-sector is
adjacent to $(i+r)$th vertex of one of the next $m$-sectors, where
$i+r$ is taken modulo $m$ in the range of $\{1,2,\dots,m\}$. By
considering this fact, it is not hard to see that we can find the
desired pattern by finding a cover of $P(m, r)$. By cutting $P(m,r)$
between the edges $(u_1,u_m)$ and $(v_1,v_m)$, we have an $m$-sector
with a pattern of selection which is defined by the cover of
$P(m,r)$.

Applying this pattern repeatedly on $m$-sectors yields a cover of
$P(n,k)$. Obviously, this selection covers all spokes and $U$-edges.
Furthermore, all $V$-edges must be covered, because otherwise, if an
edge $v_iv_{i+k}$ exists which is not covered, this means that the
edge $v_{i\pmod m}v_{(i+k)\pmod m}$ or equivalently $v_jv_{j+r}$ is
not covered in $P(m,r)$, where $j\equiv i \pmod m$, which is a
contradiction.

(b) In this case, we can apply the idea of previous case with some
modifications. We partition $P(n,k)$ from $u_1$ and $v_1$ into
$\lfloor\frac{n}{m}\rfloor$ consecutive $m$-sectors. Then, we apply
the pattern obtained from the cover of $P(m, r)$ to these sectors.
Let $r' \equiv n \pmod m$. Therefore, an $r'$-sector is left. For
covering $P(n,k)$ completely, it is not hard to see that it suffices
to choose $u_{n-r'+1}, \dots, u_n$ and
$v_{n-k+1},\dots,v_n,\dots,v_{n+k-r'}$ in addition to other
vertices. Now, we reach the following upper bound for $m < k$:
\begin{align*}
\beta(P(n,k))&\le\beta(P(m,r))\left\lfloor\cfrac{n}{m}\right
\rfloor+r'+(n+k-r'-(n-k+1)+1)\\
&\le\beta(P(m,r))\left\lfloor\cfrac{n}{m}\right \rfloor+2k.
\end{align*}
\end{proof}

\begin{remarka}
As it is clear from the proof of the Theorem~$\ref{thm6}$(b), the
bound can be improved, because to be sure about covering of edges,
we blindly selected $2k$ vertices (that some of them might have been
selected before in their $m$-sectors).
\end{remarka}
\begin{corollary}\label{cor3}
For all even $k$, we have
\begin{enumerate}
    \item[(a)] if $k-1\mid n$, then $\beta(P(n,k))\le
n+\cfrac{n}{k-1}$
    \item[(b)] if $k-1\nmid n$, then $\beta(P(n,k))\le
n+\left\lfloor\cfrac{n}{k-1}\right \rfloor+2k$.
\end{enumerate}
\end{corollary}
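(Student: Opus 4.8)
The plan is to specialize Theorem~\ref{thm6} to the choice $m = k-1$. Since $k$ is even, $m = k-1$ is odd; assuming $k \ge 4$ (the value $k = 2$ makes part (a) read $\beta(P(n,k)) \le 2n$, which is vacuously true, and part (b) never applies since then $k-1 = 1 \mid n$), we have $m \ge 3$, $m < k$, and $r := k \bmod m$ equals $1$, so the hypothesis $m > 2r > 0$ of Theorem~\ref{thm6} becomes $k-1 > 2 > 0$ and is satisfied. Hence Theorem~\ref{thm6} applies with these parameters, and the ``inner'' graph that drives the recursion is the prism $P(k-1,1)$.

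Next I would bound $\beta(P(k-1,1))$ from above. Both $k-1$ and $1$ are odd, so Proposition~\ref{thm5} applied to $P(k-1,1)$ gives $\beta(P(k-1,1)) \le (k-1) + \frac{1+1}{2} = k$. (By Corollary~\ref{cor:oddnbasic} this is in fact an equality, but only the upper bound is needed.)

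Finally I would substitute this estimate into Theorem~\ref{thm6}. For part (a), when $k-1 \mid n$, Theorem~\ref{thm6}(a) gives
\[
\beta(P(n,k)) \le \frac{n}{k-1}\,\beta(P(k-1,1)) \le \frac{n}{k-1}\cdot k = n + \frac{n}{k-1}.
\]
For part (b), when $k-1 \nmid n$, Theorem~\ref{thm6}(b) gives
\[
\beta(P(n,k)) \le \left\lfloor \frac{n}{k-1}\right\rfloor \beta(P(k-1,1)) + 2k \le \left\lfloor \frac{n}{k-1}\right\rfloor (k-1) + \left\lfloor \frac{n}{k-1}\right\rfloor + 2k \le n + \left\lfloor \frac{n}{k-1}\right\rfloor + 2k,
\]
where the last inequality uses $(k-1)\lfloor n/(k-1)\rfloor \le n$.

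I do not expect a genuine obstacle here: the corollary is a one-line substitution into Theorem~\ref{thm6} once the inner instance is identified. The only points needing a little care are (i) verifying the parameter constraints of Theorem~\ref{thm6}, in particular that $m > 2r$ forces $k \ge 4$, so the degenerate value $k = 2$ must be set aside (where the claim is anyway trivial); and (ii) observing that $k \bmod (k-1) = 1$, so that the recursion bottoms out at the prism $P(k-1,1)$, whose cover number is controlled by Proposition~\ref{thm5}.
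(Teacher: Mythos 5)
Your proposal is correct and takes essentially the same route as the paper: set $m=k-1$ in Theorem~\ref{thm6}, note $r=1$ so the recursion bottoms out at the prism $P(k-1,1)$ with $\beta(P(k-1,1))\le k$, and substitute into parts (a) and (b). Your explicit verification of the hypothesis $m>2r>0$ (forcing $k\ge 4$) and your separate handling of the degenerate case $k=2$ are in fact slightly more careful than the paper's one-line substitution.
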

\begin{proof}
(a) It is enough to let $m=k-1$ in Theorem~\ref{thm6}. In this case,
we obtain
\begin{align*}
\beta(P(n,k))&\le\cfrac{n}{m}\beta(P(m,r))\\
&=\cfrac{n}{k-1}\beta(P(k-1,1))\\
&=\cfrac{n}{k-1}k = n + \cfrac{n}{k-1}.
\end{align*}
(b) The proof is similar to part (a).
\end{proof}

\section{Some exact values for $\beta(P(n,k))$}
In this section, we introduce exact values of $\beta(P(n,k))$ for
some $n$ and $k$.
\begin{lemma}\label{lem4}
$P(n,k)$ is a bipartite graph if and only if $n$ is even and $k$ is
odd.
\end{lemma}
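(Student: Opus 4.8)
The plan is to use the classical characterization that a graph is bipartite if and only if it contains no odd cycle, and to treat the two implications separately.

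For the direction ``$n$ even and $k$ odd $\Rightarrow$ bipartite'', I would exhibit an explicit $2$-coloring. Assign to each $u_i$ the color $i \bmod 2$ and to each $v_i$ the color $(i+1)\bmod 2$. Since $n$ is even, reducing a subscript modulo $n$ does not change its parity, so these colors are well defined on the whole vertex set (this is precisely where evenness of $n$ is used). Then I would check the three edge types: a $U$-edge $u_iu_{i+1}$ joins colors $i$ and $i+1$, which differ; a spoke $u_iv_i$ joins colors $i$ and $i+1$, which differ; and a $V$-edge $v_iv_{i+k}$ joins colors $i+1$ and $i+k+1$, which differ exactly because $k$ is odd. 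Hence no edge is monochromatic and $P(n,k)$ is bipartite.

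For the converse I would argue contrapositively: if $n$ is odd or $k$ is even, then $P(n,k)$ has an odd cycle. If $n$ is odd, the outer cycle $u_1u_2\cdots u_nu_1$ already has odd length $n$. If $k$ is even, consider the closed walk $u_1,u_2,\dots,u_{k+1},v_{k+1},v_1,u_1$, which uses the $k$ outer edges $u_1u_2,\dots,u_ku_{k+1}$, the spoke $u_{k+1}v_{k+1}$, the $V$-edge $v_{k+1}v_1$ (this is the edge $v_1v_{1+k}$), and the spoke $v_1u_1$. Because $n>2k$, the vertices $u_1,\dots,u_{k+1},v_1,v_{k+1}$ are pairwise distinct, so this is genuinely a cycle, of length $k+3$, which is odd when $k$ is even. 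In either case $P(n,k)$ is not bipartite, completing the proof.

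The argument is essentially routine; the only points requiring a little care are the well-definedness of the coloring modulo $n$ in the first direction (the real role of the hypothesis that $n$ is even) and verifying that the exhibited $(k+3)$-cycle does not repeat a vertex, for which the standing assumption $n>2k$ is amply sufficient. I would expect the write-up to be short, the mild bookkeeping of subscripts modulo $n$ being the main thing to state precisely.
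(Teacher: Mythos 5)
Your proposal is correct and follows essentially the same route as the paper: the explicit parity $2$-coloring you give is the paper's bipartition $X=\{u_i,v_{i+1} : i \text{ odd}\}$, $Y=\{u_i,v_{i+1} : i \text{ even}\}$, and your odd cycles (the outer $n$-cycle when $n$ is odd, and the $(k+3)$-cycle $u_1u_2\cdots u_{k+1}v_{k+1}v_1u_1$ when $k$ is even) are exactly those used in the paper's proof. Your added care about well-definedness of the coloring modulo $n$ and distinctness of the cycle's vertices is a welcome refinement but not a different argument.
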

\begin{proof}
For odd $n$, all vertices of $U$ form an odd cycle. In addition,
for $n$ and $k$ both even, the cycle $u_1 v_1 v_{k+1} u_{k+1}
u_{k} u_{k-1} ... u_2 u_1$ is an odd cycle, $C_{k+3}$. For even
$n$ and odd $k$, let $X=\{ u_i, v_{i+1} | i$ odd $ \}$ and $Y=\{
u_i, v_{i+1}| i$ even $\}$. It is easy to see that this is a
bipartition of vertices of $P(n,k)$.
\end{proof}

\begin{proposition}\label{prp1}
$\beta (P(n,k))=n$ if and only if $n$ is even and $k$ is odd.
\end{proposition}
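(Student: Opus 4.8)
The plan is to prove both directions of the equivalence. For the ``if'' direction, suppose $n$ is even and $k$ is odd. By Lemma~\ref{lem4}, $P(n,k)$ is then bipartite, with the explicit bipartition $X = \{u_i, v_{i+1} \mid i \text{ odd}\}$ and $Y = \{u_i, v_{i+1} \mid i \text{ even}\}$ given in that lemma. Since $P(n,k)$ is cubic (3-regular) on $2n$ vertices, it has $3n$ edges, and being bipartite it has a perfect matching (by König's theorem, or simply because a bipartite cubic graph has a proper 3-edge-coloring, each colour class being a perfect matching). By the König--Egerváry / Gallai identities, in a bipartite graph $\beta(G) = $ maximum matching size; a perfect matching has $n$ edges, so $\beta(P(n,k)) = n$. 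Alternatively, and more self-containedly, one can exhibit $X$ (or $Y$) directly as a vertex cover of size exactly $n$: every edge of a bipartite graph has one endpoint in each side, so $X$ is a cover with $|X| = n$; and no cover can be smaller than $n$ because the $n$ spokes $u_i v_i$ form a perfect matching, forcing any cover to contain at least one endpoint from each, i.e.\ at least $n$ vertices.

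For the ``only if'' direction, suppose $\beta(P(n,k)) = n$. We must show $n$ is even and $k$ is odd. The contrapositive splits into two cases. If $n$ is odd, then Corollary~\ref{cor:oddnbasic} gives $\beta(P(n,k)) \ge n+1 > n$, a contradiction. If $n$ is even but $k$ is even, I would argue that $\beta(P(n,k)) > n$ as well. One clean way: a cover of size exactly $n$ must, since the $n$ spokes form a perfect matching, pick exactly one endpoint of each spoke --- so for each $i$ it selects exactly one of $u_i, v_i$. Such a cover is a ``transversal'' of the spokes; now the $U$-edges form the cycle $u_1 u_2 \cdots u_n u_1$ of length $n$, and the selected $u$-vertices must cover it, while the $V$-edges form $(n,k)$ cycles each of length $n/(n,k)$ which the selected $v$-vertices must cover. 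Covering a cycle $C_\ell$ with an \emph{independent} set of the cycle (independent because if both $u_i,u_{i+1}$ are chosen then $v_i,v_{i+1}$ are both unchosen, leaving edge $v_iv_{i+k}$ \dots\ actually one must instead count) forces $\ell$ even. The slick version: such a size-$n$ cover corresponds to a set $X$ with exactly one of $\{u_i,v_i\}$ in $X$ for each $i$, covering all edges, which means $V(P(n,k)) \setminus X$ is also a vertex cover of the same size $n$ --- hence $X$ is an independent dominating-type complement, i.e.\ $X$ is simultaneously a vertex cover and an independent set, so $X$ is one side of a bipartition and $P(n,k)$ is bipartite; then Lemma~\ref{lem4} forces $n$ even and $k$ odd.

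The main obstacle, and the step I would be most careful about, is the ``only if'' direction when $n$ is even and $k$ is even: one must rule out $\beta = n$ here, and the cleanest route is the observation above --- \emph{a vertex cover of size exactly $n$ in any cubic graph on $2n$ vertices with a perfect matching must be an independent set} (since it meets every spoke exactly once, and if it contained an edge $xy$ it would have to cover the remaining $3n - \deg$-worth of edges with only $n-?$ vertices --- more precisely, a cover $Q$ with $|Q| = n$ in a graph with a perfect matching $M$ of size $n$ must pick one endpoint per matching edge; if two chosen vertices were adjacent via a non-matching edge, some matching edge would have neither endpoint chosen). Hence $V \setminus Q$ is independent too, so $P(n,k)$ is bipartite, and Lemma~\ref{lem4} closes the case. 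I would double-check the ``independent cover $\Rightarrow$ bipartite'' step: if $Q$ is both a vertex cover and independent, every edge goes between $Q$ and $V\setminus Q$, and $V \setminus Q$ is independent precisely because $Q$ is a cover --- so $(Q, V\setminus Q)$ is a bipartition, as needed.
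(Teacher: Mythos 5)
Your ``if'' direction and the odd-$n$ half of the ``only if'' direction are correct and essentially the paper's argument (a bipartition class is a cover of size $n$, the spokes form a perfect matching giving the lower bound, and Corollary~\ref{cor:oddnbasic} handles odd $n$). The gap is in the remaining case ($n$ even, $k$ even), where everything rests on the claim that a cover $Q$ of size $n$ must be independent, justified by: ``if two chosen vertices were adjacent via a non-matching edge, some matching edge would have neither endpoint chosen.'' That implication is false. Choosing exactly one endpoint of every spoke does not forbid adjacencies inside $Q$ along non-spoke edges: the set $\{u_1,u_2\}\cup\{v_3,v_4,\dots,v_n\}$ meets every spoke exactly once and contains the $U$-edge $u_1u_2$, yet no spoke is left uncovered (a path $x_1x_2x_3x_4$ with matching $\{x_1x_2,x_3x_4\}$ and cover $\{x_2,x_3\}$ shows the same thing in miniature). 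Your first attempt at this case, via covering the $U$-cycle and the $V$-cycles, is abandoned mid-sentence, so the flawed claim is the only complete route you offer for the crucial case.

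The statement you want is nevertheless true, but it needs the $3$-regularity, not just the perfect matching, and a short count rather than the asserted implication: $P(n,k)$ has $3n$ edges; if $|Q|=n$, then $I=V(P(n,k))\setminus Q$ is independent (it is the complement of a cover) with $|I|=n$, and each vertex of $I$ has all three neighbours in $Q$, which already accounts for $3n$ edges between $I$ and $Q$, i.e.\ for all edges. Hence no edge lies inside $Q$, so $(Q,I)$ is a bipartition and Lemma~\ref{lem4} forces $n$ even and $k$ odd. With that repair your route is genuinely different from the paper's: the paper instead argues that a size-$n$ cover must take exactly one endpoint of each spoke and exactly alternate vertices along the $U$-cycle, and then exhibits the uncovered edge $v_1v_{k+1}$ when $k$ is even; your counting argument is shorter and reuses Lemma~\ref{lem4}, while the paper's is more explicit. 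As written, however, the key independence step is unjustified.
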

\begin{proof}
If $\beta (P(n,k)) = n$ then by Corollary~\ref{cor:oddnbasic}, $n$
is not odd. To cover $U$-edges and spokes, any minimum cover $c^*$
must alternatively select $\frac{n}{2}$ vertices from $U$ and
$V$, say without loss of generality with beginning from $u_1$ and
$v_2$ respectively. Therefore, when $k$ is even, the edge
$v_1v_{k+1}$ is not covered by $c^*$, which is a contradiction.
This completes proof of sufficiency. Conversely, for even $n$ and
odd $k$, by Lemma~\ref{lem4}, $P(n,k)$ is a bipartite graph, thus
for an arbitrary bipartition, the smaller part is a cover of size
at most $n$. Also notice $P(n,k)$ has a matching of size $n$, for
example the spokes, thus the size of any cover must be at least
$n$. Therefore, we have $\beta (P(n,k))=n$.
\end{proof}

\begin{theorem}\label{thm:nsn+1}
$\beta(P(n,k))= n + 1$ if and only if $n$ is odd and $k = 1$, or
$(n,k)=(5,2)$.
\end{theorem}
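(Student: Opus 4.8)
The plan is to translate the statement into the invariant $d(P(n,k))$. By Corollary~\ref{cor:new} and Equation~\eqref{socardi}, a minimum cover $c^{*}$ satisfies $\beta(P(n,k))=|\so(c^{*})|=n+\tfrac{a(c^{*})-b(c^{*})}{2}$, which by Theorem~\ref{thm1} equals $n+\tfrac{d(P(n,k))}{2}$. Since a strip of size $m$ contributes $2\lfloor m/2\rfloor$ to $a(c)-b(c)$, that quantity is always a nonnegative even integer, so $\beta(P(n,k))=n+1$ is equivalent to $d(P(n,k))=2$; and if $n$ is even and $k$ is odd then $d(P(n,k))=0$ by Proposition~\ref{prp1} (using that the spokes form a matching of size $n$). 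Thus I must show $d(P(n,k))=2$ iff ($n$ odd and $k=1$) or $(n,k)=(5,2)$. The ``if'' direction is immediate: for odd $n$ with $k=1$, Corollary~\ref{cor:oddnbasic} and Proposition~\ref{thm5} give $n+1\le\beta\le n+\tfrac{k+1}{2}=n+1$; for $(n,k)=(5,2)$, Proposition~\ref{thm4} gives $\beta\ge 6$, while $\{u_{2},u_{4},u_{5},v_{1},v_{2},v_{3}\}$ is directly checked to be a cover of $P(5,2)$ of size $6$.

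For the ``only if'' direction, assume $d(P(n,k))=2$, so ``$n$ even, $k$ odd'' is excluded. Fix a minimum cover and pass to $\so(c)$ (the strips are unchanged). Since $\sum 2\lfloor m_i/2\rfloor=2$, the strips of $c$ are either one strip of size $2$ and all others of size $1$ (\textbf{Case A}), or one strip of size $3$ and all others of size $1$ (\textbf{Case B}). Let $S\subseteq V$ be the selected $V$-vertices and $T=V\setminus S$; then $S$ is a vertex cover of the graph $H$ on $V$ whose edges are the $V$-edges, i.e.\ of a disjoint union of $(n,k)$ cycles of length $\ell:=n/(n,k)$, with $\ell\ge 3$ because $n>2k$. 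Comparing the number of strips with the number of (nonempty) gaps gives $|S|\le\tfrac{n+2}{2}$, hence $\beta(H)\le|S|\le\tfrac{n}{2}+1$, which is the main leverage.

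First dispose of the case $n,k$ both even, so $g:=(n,k)$ is even and $\ell\ge 4$ whenever $\ell$ is even. If $\ell$ is odd then $\beta(H)=\tfrac{n+g}{2}$, forcing $g=2$ and $|S|=\tfrac n2+1$; but then the forced Case-B structure (all gaps of size $1$) puts $\tfrac n2$ vertices of $S$ in one parity class of the subscripts and only one in the other, so $S$ cannot contain a vertex cover of both (odd) cycles of $H$ — a contradiction. If $\ell$ is even then $\beta(H)=\tfrac n2$, so $|S|\in\{\tfrac n2,\tfrac n2+1\}$; the value $\tfrac n2+1$ fails by the same parity argument, and if $|S|=\tfrac n2$ then $S$ is a minimum — hence \emph{independent} — vertex cover of the bipartite graph $H$, so the $2$-colouring $\epsilon(j)=[v_j\in S]$ obeys $\epsilon(j+k)=1-\epsilon(j)$, hence $\epsilon$ is invariant under $+2k$ and so $S$ is a union of residue classes modulo $\gcd(2k,n)=2g$, with $n/2g=\ell/2\ge 2$ full periods. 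Then the number of adjacent pairs of $S$ in the cyclic order on $V$ is a multiple of $\ell/2\ge 2$: this is $\ne 1$, ruling out Case A, and if it is $2$ then $\ell/2=2$ and the two adjacent pairs lie a period apart in two distinct strips of size $2$ rather than one strip of size $3$, ruling out Case B. Hence $\beta(P(n,k))\ge n+2$ when $n,k$ are both even.

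Finally let $n$ be odd. Proposition~\ref{thm4} gives $n+1=\beta\ge n+\tfrac{(n,k)+1}{2}$, so $(n,k)=1$, $H$ is a single $n$-cycle, $\beta(H)=\tfrac{n+1}{2}$, and the bound above pins $|S|=\tfrac{n+1}{2}$; thus $S$ is a \emph{minimum} vertex cover of this $n$-cycle. Minimum vertex covers of an odd cycle are unique up to rotation, so up to translating $V$ I may take $S=\{v_0\}\cup\{v_{jk}:j\text{ odd},\ 1\le j\le n-2\}$, i.e.\ $\overline S=\{v_{2k},v_{4k},\dots,v_{(n-1)k}\}$, whose subscripts form an arithmetic progression in $\mathbb{Z}_n$ of common difference $2k$ with $\tfrac{n-1}{2}$ terms. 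In Case A, the strip pattern of $S$ is equivalent to $\overline S$ containing no two vertices with consecutive subscripts; since $2ki,2kj$ are consecutive iff $i-j\equiv\pm(2k)^{-1}$, this forces $(2k)^{-1}\equiv\pm\tfrac{n-1}{2}$, hence $k\equiv\pm1\pmod n$, hence $k=1$. In Case B, the strip pattern forces the gaps between cyclically consecutive subscripts of $\overline S$ to be one of length $1$, one of length $4$, and the rest of length $2$; for $n\ge 7$ these are three distinct lengths $\{1,2,4\}$, which the three-distance (Steinhaus) theorem forbids for an arithmetic progression in $\mathbb{Z}_n$, because its three gap lengths must have the form $\{x,y,x+y\}$. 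Only $n=5$ survives (where just the two lengths $\{1,4\}$ occur), and then $(n,k)=1$ with $k<\tfrac n2$ forces $k=2$. I expect this last paragraph to be the crux — reading off the exact gap multiset of $\overline S$ from the forced strip pattern and invoking the three-distance theorem — while everything else is bookkeeping with the $\so$ construction and the bound $\beta(H)\le|S|\le\tfrac n2+1$.
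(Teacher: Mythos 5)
Your proof is correct, but its necessity direction takes a genuinely different route from the paper's. The paper splits into $n=2k+1$ and $n>2k+1$: the first case is settled by quoting $\beta(P(2k+1,k))=2k+1+\lceil\frac{2k+1}{5}\rceil$ from~\cite{behzad06}, and in the second it argues directly that a cover of size $n+1$ must select alternate $V$-vertices with a single exceptional strip, and then counts the uncovered $V$-edges whose endpoints both get odd subscripts, which eliminates every $k$ except $k=1$ (with $n$ forced odd via Proposition~\ref{prp1}). You instead view the selected $V$-set $S$ as a vertex cover of the inner $2$-regular graph $H$ (the $(n,k)$ cycles of length $n/(n,k)$): the strip-count bound $\lvert S\rvert\le\frac{n+2}{2}$ against $\beta(H)$ pins down $\lvert S\rvert$; for odd $n$ you use uniqueness up to rotation of minimum covers of an odd cycle, the computation of $(2k)^{-1}\pmod n$ in the size-$2$-strip case, and the three-distance theorem applied to the arithmetic progression $\{2ki \bmod n\}$ in the size-$3$-strip case; for $n,k$ both even you use a subscript-parity count plus a periodicity/orbit argument from the independence of a minimum cover of even cycles. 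I checked the steps (the bound $\lvert S\rvert\le\frac{n+2}{2}$, $\gcd(2k,n)=2(n,k)$ in the even--even case, the orbit-size argument for adjacent pairs, the gap multiset $\{1,4,2,\dots,2\}$, and the admissible values $(2k)^{-1}\equiv\pm\frac{n-1}{2}$) and they are all sound. What your route buys is self-containedness with respect to~\cite{behzad06} (the paper genuinely needs that external value for $n=2k+1$, whereas your argument treats all $n>2k$ uniformly); what it costs is the import of the classical three-gap theorem (you should say explicitly that you use its discrete, rational-rotation form, valid here since $2k$ is invertible mod $n$) and a noticeably longer case analysis than the paper's short parity argument. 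One harmless overstatement: in the surviving $n=5$ instance of the size-$3$-strip case your analysis only forces $k\in\{1,2\}$, not $k=2$, but both pairs lie in the theorem's admissible list, so the conclusion stands.
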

\begin{proof}
\textit{Sufficiency.} Trivially $\beta(P(5,2))=6=5+1$. Now suppose
$n$ is odd and $k = 1$. By Corollary~\ref{cor:oddnbasic}, we have
$\beta(P(n,k)) \ge n+1$. Consider a cover that selected vertices
$\{u_i|i$~odd~$\}$ and $\{v_i|i=1$ or $i$ even $\}$. Since this is a
cover of size $n+1$, we obtain $\beta(P(n,k)) = n+1$.

\textit{\it Necessity.} Suppose $\beta(P(n,k)) = n+1$. Let $c^*$ be
an arbitrary minimum cover of $P(n,k)$. If $n = 2k + 1$, then
$\beta(P(2k+1,k))=2k+1+\lceil\frac{2k+1}{5}\rceil$
(see~\cite{behzad06}) and therefore, $\lceil\frac{2k+1}{5}\rceil =
1$; and so the only solutions in this case are $(n,k)=(3,1)$ and
$(n,k)=(5,2)$. Now suppose $n> 2k+1$. By considering
Equation~\eqref{socardi} in the proof of Theorem~\ref{thm1} for
$c^*$, we have $a(c^*)-b(c^*)=2$. Therefore, the size of each strip
of $c^*$ must be less than $4$. Thus, it is easy to check that all
strips have size $1$ except exactly one strip with size $2$ or $3$.

We claim that $c^*$ must select at least one vertex from every
pair of consecutive vertices of $V$. Otherwise, without loss of
generality, say the two consecutive unselected vertices are
$v_{k+1}$ and $v_{k+2}$. Clearly, $v_1, v_2, v_{2k+1}$, and
$v_{2k+2}$ are selected by $c^*$, and they form two strips of
size more than one or one strip of size more than three, a
contradiction. Therefore, without loss of generality the strip of
size greater than one is $\{v_n, v_1, v_2\}$ when $n$ is even,
and is $\{v_1, v_2\}$ when $n$ is odd, and $c^*$ must select
alternatively from vertices of $V$ beginning with $v_2$.

If $k$ is even, then both endpoints of edges $v_1v_{k+1},
v_3v_{k+3}, \dots, v_{k+1}v_{2k+1}$ have odd subscripts. Note
that $c^*$ has selected vertices with even subscripts and $v_1$.
So it can cover only one of these $\frac{k}{2}+1$ edges. This
leaves $\frac{k}{2}$ uncovered edges, a contradiction. If $k$ is
odd (which by Proposition~\ref{prp1} implies that $n$ is also
odd), then both endpoints of edges between the vertices $v_{n-i}$
and $v_{n-i+k}$, for $i=0,2,4,\dots,k-1$ have odd subscripts.
Again, since that $c^*$ has selected vertices with even
subscripts and $v_1$, it can cover only one of these
$\lceil\frac{k}{2}\rceil$ edges. Hence, we have $k=1$. This
completes proof of necessity.
\end{proof}

By Proposition~\ref{prp1} and Theorem~\ref{thm:nsn+1}, the
following is immediate.
\begin{corollary}\label{cor:>n+1}
$\beta(P(n,k))\ge n+2$ if and only if none of the following
conditions holds:
\begin{enumerate}
    \item [(a)] $n$ is even and $k$ is odd,
    \item [(b)] $n$ is odd and $k = 1$,
    \item [(c)] $n = 5$ and $k = 2$.
\end{enumerate}
\end{corollary}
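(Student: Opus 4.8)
The statement is an immediate corollary, so the proof plan is essentially a bookkeeping exercise combining two earlier results.

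\textbf{The plan.} The statement to be proved, Corollary~\ref{cor:>n+1}, claims that $\beta(P(n,k)) \ge n+2$ precisely when none of conditions (a), (b), (c) holds. The approach is to observe that for any generalized Petersen graph we always have the trivial lower bound $\beta(P(n,k)) \ge n$ (for instance because the $n$ spokes form a perfect matching, so any cover must have size at least $n$; alternatively this is implicit in Proposition~\ref{prp1}). Hence $\beta(P(n,k))$ is either equal to $n$, equal to $n+1$, or at least $n+2$. The strategy is then pure case elimination: $\beta(P(n,k)) \ge n+2$ holds if and only if $\beta(P(n,k)) \ne n$ \emph{and} $\beta(P(n,k)) \ne n+1$.

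\textbf{Key steps, in order.} First I would invoke Proposition~\ref{prp1}, which states that $\beta(P(n,k)) = n$ if and only if $n$ is even and $k$ is odd; this is exactly condition (a). Second I would invoke Theorem~\ref{thm:nsn+1}, which states that $\beta(P(n,k)) = n+1$ if and only if ($n$ is odd and $k=1$) or $(n,k)=(5,2)$; these are exactly conditions (b) and (c). Third I would note that the three describing conditions are mutually exclusive (if $n$ is even then (b) and (c) fail; if $n$ is odd and $k=1$ then (c) fails since $(5,2)$ has $k=2$), so no overlap needs to be reconciled. Combining: $\beta(P(n,k)) \le n+1$ if and only if (a) or (b) or (c) holds, and therefore $\beta(P(n,k)) \ge n+2$ if and only if none of them holds. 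This is the desired equivalence.

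\textbf{Main obstacle.} Honestly there is no real obstacle here — the corollary is flagged as ``immediate'' and the only thing to be careful about is making sure the lower bound $\beta(P(n,k)) \ge n$ is genuinely available (it is, via the spoke matching) so that the trichotomy ``$= n$, $= n+1$, or $\ge n+2$'' is exhaustive, and making sure one does not accidentally double-count a case. So the ``hard part,'' such as it is, is simply the clean logical assembly of the two preceding characterizations into a single negated disjunction. The proof can therefore be stated in two or three sentences: by Proposition~\ref{prp1} and Theorem~\ref{thm:nsn+1}, $\beta(P(n,k)) < n+2$ exactly when one of (a), (b), (c) holds, which gives the contrapositive of the claimed statement.
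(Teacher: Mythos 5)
Your proposal is correct and matches the paper's argument: the corollary is stated there as immediate from Proposition~\ref{prp1} and Theorem~\ref{thm:nsn+1}, which is exactly the case elimination you carry out, with the spoke-matching bound $\beta(P(n,k))\ge n$ (already noted in the proof of Proposition~\ref{prp1}) making the trichotomy exhaustive.
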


\begin{proposition}\label{divisableprop}
For $n$ and $k$ both odd, where $k\mid n$,
$\beta(P(n,k))=n+\frac{k+1}{2}$.
\end{proposition}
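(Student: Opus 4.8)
The plan is to observe that this proposition is essentially a \emph{pinching} argument: we already have a general upper bound and a general lower bound from the previous section, and the divisibility hypothesis $k \mid n$ forces them to coincide.

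First I would dispatch the upper bound. Since both $n$ and $k$ are odd, Proposition~\ref{thm5} applies verbatim and yields $\beta(P(n,k)) \le n + \frac{k+1}{2}$; no divisibility assumption is even needed here. Concretely this is the cover that takes the $U$-vertices with odd subscripts, the $V$-vertices with even subscripts, and one endpoint of each of the $\frac{k+1}{2}$ leftover $V$-edges whose endpoints both have odd subscripts.

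Next I would handle the lower bound, which is where $k \mid n$ enters. Because $k$ divides $n$ we have $(n,k) = k$. Since $n$ is odd, Proposition~\ref{thm4} gives $\beta(P(n,k)) \ge n + \frac{(n,k)+1}{2} = n + \frac{k+1}{2}$. Combining the two displayed inequalities forces $\beta(P(n,k)) = n + \frac{k+1}{2}$, as claimed.

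There is no real obstacle left once Propositions~\ref{thm4} and~\ref{thm5} are in hand; the only point worth emphasizing in the write-up is that the hypothesis $k \mid n$ is used solely to guarantee $(n,k) = k$, i.e.\ to make the generic lower bound of Proposition~\ref{thm4} meet the generic upper bound of Proposition~\ref{thm5}. (Indeed, this also shows the cover constructed in Proposition~\ref{thm5} is optimal in this case, and that every such minimum cover must, up to rotation, pick alternate vertices around each of the $k$ disjoint $V$-cycles of length $n/k$ together with $\frac{n+1}{2}$ vertices of $U$.) If one wished to avoid citing Proposition~\ref{thm4}, one could re-derive the lower bound inline: any cover needs $\lceil n/2 \rceil = \frac{n+1}{2}$ vertices of $U$ to cover the odd cycle on $U$, and at least $k \cdot \lceil \frac{n/k}{2} \rceil = \frac{n+k}{2}$ vertices of $V$ to cover the $k$ disjoint odd cycles into which the $V$-edges decompose; summing gives $n + \frac{k+1}{2}$. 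I would keep the proof as the short two-citation argument.
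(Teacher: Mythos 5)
Your proof is correct and is essentially identical to the paper's: the paper likewise notes that $k\mid n$ gives $(n,k)=k$ and then pinches $\beta(P(n,k))$ between the lower bound of Proposition~\ref{thm4} and the upper bound of Proposition~\ref{thm5}. Nothing further is needed (though the parenthetical claim about the structure of \emph{every} minimum cover is an unproven aside and best omitted).
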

\begin{proof}
If $k|n$, $(n,k)=k$. Thus, by Propositions~\ref{thm4} and~\ref{thm5}
, we have
\begin{equation*}
n+\frac{(n,k)+1}{2} = n+\frac{k+1}{2}\leq \beta(P(n,k))\leq
n+\frac{k+1}{2}. 
\end{equation*}
\end{proof}
\\
By the results above, we have the following precise values.

\begin{proposition}
\ \ $\beta (P(n,1)) =
\begin{cases}
    n        &\mbox{if n is even} \\
    n + 1    &\mbox{if n is odd}.
 \end{cases}$
\end{proposition}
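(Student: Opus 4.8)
The plan is to deduce this proposition entirely from the two classification results already established, \textbf{Proposition~\ref{prp1}} and \textbf{Theorem~\ref{thm:nsn+1}}, splitting on the parity of $n$. Note first that $P(n,1)$ is a legitimate generalized Petersen graph, since the standing hypothesis $n>2k$ gives $n>2$, so both cited results apply with $k=1$.

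For the even case, I would simply invoke \textbf{Proposition~\ref{prp1}}: it asserts $\beta(P(n,k))=n$ precisely when $n$ is even and $k$ is odd. Since $k=1$ is odd, this yields $\beta(P(n,1))=n$ for every even $n$ immediately, with no further work.

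For the odd case, I would invoke \textbf{Theorem~\ref{thm:nsn+1}}, which states that $\beta(P(n,k))=n+1$ if and only if either ($n$ odd and $k=1$) or $(n,k)=(5,2)$. With $k=1$ and $n$ odd, the first alternative holds, so $\beta(P(n,1))=n+1$. (One could alternatively combine \textbf{Corollary~\ref{cor:oddnbasic}}, giving $\beta\ge n+1$, with the explicit cover $\{u_i : i\text{ odd}\}\cup\{v_i : i=1 \text{ or } i \text{ even}\}$ used in the sufficiency direction of that theorem, which has size $n+1$; but citing the theorem is cleaner.)

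There is essentially no obstacle here: the proposition is a direct specialization of earlier theorems, and the only thing worth a sentence is confirming that the hypotheses of those theorems ($P(n,k)$ being defined, i.e.\ $n>2k$) are met when $k=1$. So the write-up will be three short lines: one reducing the even case to Proposition~\ref{prp1}, one reducing the odd case to Theorem~\ref{thm:nsn+1}, and a closing remark that these two cases exhaust all $n$.
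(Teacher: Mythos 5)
Your proposal is correct and matches the paper's intent exactly: the paper states this proposition without a separate proof, introducing it with ``By the results above,'' i.e.\ it is read off from Proposition~\ref{prp1} (even $n$, odd $k$) and Theorem~\ref{thm:nsn+1} (odd $n$, $k=1$), which is precisely your argument. Nothing further is needed.
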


And we recall the following result of~\cite{behzad06}:

\begin{proposition}
\ \ $\beta (P(n,2))= n + \lceil \frac{n}{5} \rceil$.
\end{proposition}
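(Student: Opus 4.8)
The plan is to combine the characterization machinery of Section~2 with a short combinatorial optimization over the $V$-selection. By Corollary~\ref{cor:new}, Equation~\eqref{socardi} and Theorem~\ref{thm1}, any minimum cover $c^*$ of $P(n,2)$ satisfies $\beta(P(n,2))=|c^*|=|\so(c^*)|=n+\tfrac12 d(P(n,2))$, so it suffices to prove $d(P(n,2))=2\lceil n/5\rceil$.

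First I would record exactly which sets $T\subseteq\{1,\dots,n\}$ of selected $V$-subscripts come from a non-trivial cover of $P(n,2)$. Since every $V$-edge has the form $v_iv_{i+2}$, three consecutive unselected subscripts leave an edge uncovered, and a strip of size $1$ at $p$ leaves $v_{p-1}v_{p+1}$ uncovered (here $n>4$ makes $p-1\ne p+1$); conversely one checks these two conditions suffice. Thus $T$ corresponds to a cyclic decomposition of $\mathbb{Z}_n$ into alternating \emph{runs} (maximal blocks of $T$, each of size $m_j\ge2$) and \emph{gaps} (each of size $g_j\in\{1,2\}$), with $\sum_j(m_j+g_j)=n$. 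Grouping each run with the gap immediately following it into a block, we get $a(c)-b(c)=\sum_j m_j-\#\{j:m_j\text{ odd}\}=\sum_j 2\lfloor m_j/2\rfloor$, which is in particular an even integer.

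For the lower bound I would prove the per-block inequality $2\lfloor m/2\rfloor\ge \frac{2}{5}(m+g)$ for all $m\ge2$, $g\in\{1,2\}$ (immediate for $m=2$, and for $m\ge3$ from $2\lfloor m/2\rfloor\ge m-1$ together with $5(m-1)\ge 2(m+2)$). Summing over all blocks gives $a(c)-b(c)\ge 2n/5$ for every non-trivial cover $c$; since the left side is an even integer, this forces $a(c)-b(c)\ge 2\lceil n/5\rceil$, hence $d(P(n,2))\ge 2\lceil n/5\rceil$. For the matching upper bound, writing $n=5q+r$ with $0\le r\le4$, I would exhibit an explicit $T$ built almost entirely from blocks $(m,g)=(3,2)$ (each contributing $2$ to the sum and using length $5$), plus one adjusted block absorbing the residue: $(4,2)$ if $r=1$, $(5,2)$ if $r=2$, $(2,1)$ if $r=3$, $(3,1)$ if $r=4$ (and no extra block if $r=0$). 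One checks the block lengths total $n$ and the cost totals $2q$ when $r=0$ and $2q+2=2\lceil n/5\rceil$ otherwise; by the characterization this $T$ really does arise from a non-trivial cover, so $d(P(n,2))\le 2\lceil n/5\rceil$. Combining the two bounds gives $\beta(P(n,2))=n+\lceil n/5\rceil$.

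The step carrying the actual content — and the main obstacle — is the per-block estimate $2\lfloor m/2\rfloor\ge\frac{2}{5}(m+g)$: it is exactly this inequality, tight only at $(m,g)=(3,2)$, that produces the $\lceil n/5\rceil$ term, and one must justify the passage from ``$\ge 2n/5$'' to ``$\ge 2\lceil n/5\rceil$'' via the parity of $a(c)-b(c)$ rather than by naive rounding. The residue case analysis for the upper bound is routine but should be written out, with attention to the smallest admissible $n$ in each residue class (recall $n>4$ for $k=2$, so every construction used above is non-vacuous).
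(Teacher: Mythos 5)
Your proposal is correct, but it is worth pointing out that the paper itself contains no proof of this proposition at all: it is merely recalled as a result of~\cite{behzad06} (a separate submitted paper on $P(n,2)$), so your argument is necessarily a different route --- indeed the only self-contained one on the table. What you do is show that the Section~2 machinery already suffices for $k=2$: by Corollary~\ref{cor:new}, Theorem~\ref{thm1} and Equation~\eqref{socardi} (together with the paper's standing convention that minimum covers are non-trivial), $\beta(P(n,2))=n+\tfrac12 d(P(n,2))$, and you then compute $d(P(n,2))$ directly as a minimization over admissible inner selections. I checked the individual steps: since $V$-edges can only be covered by $V$-vertices, the admissible sets $T$ are exactly those with all gaps of size at most $2$ and all strips of size at least $2$ (the two forbidden configurations, a gap of length $\ge 3$ or a strip of size $1$, are precisely the ways a pair $v_i,v_{i+2}$ can both be missed, and $n>2k=4$ guarantees $v_{p-1}v_{p+1}$ is a genuine edge); the cost identity $a(c)-b(c)=\sum_j 2\lfloor m_j/2\rfloor$ is right; the per-block inequality $2\lfloor m/2\rfloor\ge\tfrac{2}{5}(m+g)$ holds for all $m\ge 2$, $g\in\{1,2\}$ and is tight exactly at $(3,2)$; the passage from $\ge 2n/5$ to $\ge 2\lceil n/5\rceil$ via evenness of $a(c)-b(c)$ is valid; and the residue constructions for $r=0,\dots,4$ have the right total length and cost, including the smallest cases $n=5,\dots,9$. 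So your write-up would actually strengthen the paper by replacing an external citation with a short proof that doubles as an illustration of Theorem~\ref{thm1}; the paper's approach buys brevity only. The one presentational caveat is that the claim ``conversely one checks these two conditions suffice'' and the residue case analysis should indeed be written out, as you yourself note.
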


\begin{proposition}\label{thm9}
\ \ $\beta (P(n,3)) =
\begin{cases}
    n        &\mbox{if n is even} \\
    n + 2    &\mbox{if n is odd}.
 \end{cases}$
\end{proposition}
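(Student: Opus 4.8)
The plan is to split into the two parities of $n$ and, in each case, establish the upper bound by exhibiting an explicit cover and the matching lower bound from the general results already proved. For even $n$: here $k=3$ is odd, so Proposition~\ref{prp1} applies directly and gives $\beta(P(n,3)) = n$ with nothing further to do. So the real content is the odd-$n$ case, where we must show $\beta(P(n,3)) = n+2$.

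For odd $n$ with $k=3$, the upper bound $\beta(P(n,3)) \le n+2$ is the place where I expect to do the work. My first attempt would be to invoke Proposition~\ref{thm5}, which for odd $k$ gives $\beta(P(n,k)) \le n + \frac{k+1}{2}$; with $k=3$ this reads $\beta(P(n,3)) \le n+2$ exactly, so in fact this bound is immediate. (One should double-check that the construction in the proof of Proposition~\ref{thm5} — select $U$-vertices with odd subscripts, $V$-vertices with even subscripts, then patch the $\frac{k+1}{2}=2$ offending all-odd $V$-edges — is consistent for $k=3$; it is, since those two edges $v_{n-1}v_{n+2}$ and $v_n v_{n+3}$ can each have an endpoint added.) So both halves of the upper bound come for free from earlier results.

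The lower bound for odd $n$ is the part requiring a genuine argument: we need $\beta(P(n,3)) \ge n+2$. By Proposition~\ref{thm4} with $(n,3)\in\{1,3\}$ we only get $\beta \ge n+\frac{(n,3)+1}{2}$, which is $n+1$ when $3 \nmid n$ and $n+2$ when $3\mid n$ — so for $3\nmid n$ this is not yet enough, and I would instead appeal to Corollary~\ref{cor:>n+1}: since $n$ is odd and $k=3\ne 1$ and $(n,k)\ne(5,2)$, none of conditions (a)--(c) holds, hence $\beta(P(n,3)) \ge n+2$. Combined with the upper bound, this closes the odd case. The main obstacle, such as it is, is purely bookkeeping: making sure the edge cases (small $n$, the value $n=5$ which is excluded from the ``$\ge n+2$'' claim only for $k=2$, and the requirement $n>2k=6$) are all handled — in particular checking that $n=5,7,\dots$ with $k=3$ genuinely satisfy $n>2k$ or are treated separately — so that the clean ``upper from Proposition~\ref{thm5}, lower from Corollary~\ref{cor:>n+1}'' sandwich is valid across the whole range of odd $n$.
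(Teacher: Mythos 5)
Your proposal is correct and follows the paper's own proof exactly: even $n$ from Proposition~\ref{prp1}, and for odd $n$ the lower bound from Corollary~\ref{cor:>n+1} combined with the upper bound from Proposition~\ref{thm5}. (Your edge-case worry is moot since $n>2k$ forces $n\ge 7$ for $k=3$, so $(5,2)$ never interferes.)
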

\begin{proof}
If $n$ is even, it is implied by Proposition~\ref{prp1}. If $n$ is
odd, by Corollary~\ref{cor:>n+1}, $\beta(P(n,3)) \ge n+2$ and by
Proposition~\ref{thm5}, $\beta(P(n,3)) \le n+2$. Thus, we obtain
$\beta(P(n,3))= n + 2$.
\end{proof}

\section{Concluding remarks}
In two previous sections, we presented bounds and some exact
values for the size of minimum cover of generalized Petersen
graphs. With careful inspection of these results, we reach to the
following proposition.
\begin{proposition}\label{prp:conj}
For large enough $n$ and for a fixed $k \ne 4$, we have
$\beta(P(n,k))\le n + \lceil\frac{n}{5}\rceil + O(1)$.
\end{proposition}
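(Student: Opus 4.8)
The plan is to read the proposition off the bounds already collected in Sections~3 and~4, splitting into cases according to the parity of $k$; no new construction is required.

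First I would dispose of odd $k$. If $k$ is odd and $n$ is even, then $P(n,k)$ is bipartite and Proposition~\ref{prp1} gives $\beta(P(n,k))=n$; if $k$ is odd and $n$ is odd, Proposition~\ref{thm5} gives $\beta(P(n,k))\le n+\frac{k+1}{2}$. In either subcase $\beta(P(n,k))\le n+O(1)$ for fixed $k$, which is well below $n+\lceil n/5\rceil+O(1)$, so all odd $k$ (in particular $k=1,3,5$) are settled. Next I would treat even $k$. The value $k=4$ is excluded by hypothesis, and the remaining even values are $k=2$ and $k\ge 6$. For $k=2$ the exact formula $\beta(P(n,2))=n+\lceil n/5\rceil$ quoted from~\cite{behzad06} is already the claimed bound. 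For even $k\ge 6$ I would apply Corollary~\ref{cor3}: whether or not $(k-1)\mid n$, it yields
\begin{equation*}
\beta(P(n,k))\le n+\frac{n}{k-1}+2k .
\end{equation*}
Since $k\ge 6$ forces $k-1\ge 5$, we get $\frac{n}{k-1}\le\frac{n}{5}\le\lceil n/5\rceil$, hence $\beta(P(n,k))\le n+\lceil n/5\rceil+2k$ with $2k=O(1)$ for fixed $k$. This exhausts every $k\ne 4$.

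I do not expect a real obstacle; the proposition is essentially bookkeeping over the earlier estimates. The one place where the argument is tight is precisely $k=4$: there Corollary~\ref{cor3} gives only $\beta(P(n,4))\le n+\frac{n}{3}+8$, and the linear term $\frac{n}{3}$ outgrows $\lceil n/5\rceil$, so the method fails exactly at the borderline $k-1<5$ — which explains why $k=4$ must be omitted from the statement. The hypothesis ``for large enough $n$'' plays no substantive role beyond guaranteeing that $P(n,k)$ is defined (that is, $n>2k$) and that the $m$-sector partition underlying Theorem~\ref{thm6} and Corollary~\ref{cor3} applies; once $n>2k$ all the invoked bounds are available and the estimate above goes through verbatim.
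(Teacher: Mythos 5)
Your proposal is correct and follows essentially the same route as the paper: the authors likewise split into odd $k$ (handled by Propositions~\ref{thm5} and~\ref{prp1}), $k=2$ (the exact value from~\cite{behzad06}), and even $k>4$ (Corollary~\ref{cor3}), with your computation $\frac{n}{k-1}\le\frac{n}{5}$ plus the $O(1)$ term $2k$ being exactly the bookkeeping the paper leaves implicit.
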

\begin{proof}
This can be proved from the following facts:
\begin{enumerate}
    \item for an odd $k$, by Propositions~\ref{thm5} and~\ref{prp1},
    the statement is correct.
    \item for $k = 2$, according to~\cite{behzad06}, the statement is held.
    \item for an even $k > 4$, by Corollary~\ref{cor3}, the statement
    is true.
\end{enumerate}\end{proof}

Due to Proposition~\ref{prp:conj} and our observations for
generalized Petersen graphs for small $n$, we conjecture the
following.
\begin{conjecture}
For all $n$ and $k$, $\beta(P(n,k))\leq n+\lceil\frac{n}{5}\rceil$.
\end{conjecture}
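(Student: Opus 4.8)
The plan is to prove the bound constructively, by exhibiting for each pair $(n,k)$ a cover meeting it. By Equation~\eqref{socardi} we have $|\so(c)| = n + \frac{a(c)-b(c)}{2}$, so it suffices to produce a non-trivial cover $c$ of $P(n,k)$ whose selection from $V$ satisfies $a(c) - b(c) \le 2\lceil n/5\rceil$; equivalently, to build a semi-optimal cover of size at most $n+\lceil n/5\rceil$ directly. For $k\le 3$ nothing new is needed: the values computed for $P(n,1)$, $P(n,2)$ and $P(n,3)$ in Section~4 already satisfy the inequality once $n>2k$ rules out the smallest $n$. So the real work is the range $k\ge 4$.

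The engine for $k\ge 4$ is the recursive bound of Theorem~\ref{thm6}. The key observation is that a ``seed'' graph $P(m,r)$ with $\beta(P(m,r)) \le m + \lfloor m/5\rfloor$ is lossless for the conjecture: whenever $m\mid n$ and $r\equiv k\pmod m$, $m>2r>0$, $m<k$, Theorem~\ref{thm6}(a) yields
$\beta(P(n,k)) \le \tfrac nm\,\beta(P(m,r)) \le \tfrac nm\bigl(m+\lfloor m/5\rfloor\bigr) = n + \tfrac nm\lfloor m/5\rfloor \le n + \tfrac n5 \le n+\lceil n/5\rceil$,
using $\lfloor m/5\rfloor\le m/5$. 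Many such seeds exist, e.g.\ $P(5,1)$, $P(5,2)$, $P(6,1)$, $P(7,1)$, $P(10,1)$, $P(10,3)$, with the respective $\beta$ values $6,6,6,8,10,10$. Thus the core step is an induction on $k$: assuming the ratio bound $\beta\le m+\lfloor m/5\rfloor$ for all suitable smaller parameters, one builds, for each residue class of $k$, a seed $P(m,r)$ (with $m<k$ and $r=k\bmod m$ in the admissible range $0<r<m/2$) meeting it — typically by taking $m$ of the form $5t$ or by padding a minimum cover of $P(m',k\bmod m')$ to a longer period. The divisible case $m\mid n$ then goes through immediately.

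Two obstacles must be faced. \textbf{First}, $k=4$ sits outside every clean recursion: the only admissible moduli are $m\in\{2,3\}$, and $m=2$ gives $r=0$ (excluded) while $m=3$ gives $r=1$ but only $\beta(P(n,4))\le\tfrac n3\beta(P(3,1))=\tfrac43 n$, far too weak. So $P(n,4)$ needs a bespoke tileable pattern of period $5$ (or $10$, $15$, $20$ according to $n \bmod 5$) found by a finite search and then verified to cover all $V$-edges when tiled; this is the one genuinely ad hoc piece. \textbf{Second}, Theorem~\ref{thm6}(b) loses an additive $2k$ when $m\nmid n$, which is fatal for an \emph{exact} bound with no $O(1)$ slack. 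This must be repaired into a ``seamless'' recursion: rather than blindly selecting $2k$ wrap-around vertices, one shows the periodic pattern can be closed across the short residual $r'$-sector ($r'\equiv n\pmod m$) by locally altering $O(1)$ selections — exploiting Lemma~\ref{lem3} and the slack in semi-optimal covers — so that the residual block contributes overhead at most $\lceil n/5\rceil - \lfloor n/m\rfloor\lfloor m/5\rfloor$. One may also first invoke the standard isomorphism $P(n,k)\cong P(n,k^{-1}\bmod n)$ to assume $k\lesssim\sqrt n$, which shrinks the residual sector relative to $n$ and makes the seam easier to absorb.

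I expect the dominant difficulty to be precisely the seam, together with the demand for equality to $n+\lceil n/5\rceil$ with no additive constant: the extremal family $P(n,2)$ already attains the bound, so any construction that is even mildly wasteful on the boundary between tiles fails. Consequently the realistic shape of a full proof is: (i) the lossless recursion above for $m\mid n$; (ii) a seamless strengthening of Theorem~\ref{thm6}(b) for $m\nmid n$; (iii) the hand-built $k=4$ patterns; and (iv) a finite, machine-checkable list of small $(n,k)$ not covered by the generic arguments. Step (ii) is where I would least be surprised to see the conjecture resist, since it requires a uniform control of the boundary sector across all residues of $n$ and $k$ simultaneously.
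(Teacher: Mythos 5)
The statement you are addressing is not a theorem of the paper at all: it is stated as an open conjecture, and the authors prove only the weaker Proposition~\ref{prp:conj}, namely $\beta(P(n,k))\le n+\lceil\frac{n}{5}\rceil+O(1)$ for large $n$ and fixed $k\ne 4$, supplemented by a computer check for $n\le 35$. Your text is likewise not a proof but a research plan, and you yourself flag its unresolved pieces; measured as a proof it has genuine gaps exactly where the paper also stops. Concretely: (i) the ``seamless'' replacement of Theorem~\ref{thm6}(b) is never constructed --- the additive $2k$ loss when $m\nmid n$ is precisely the obstruction that confines the paper to an $O(1)$ error term, and asserting that one can ``locally alter $O(1)$ selections'' across the residual $r'$-sector, uniformly in all residues of $n$ and $k$, is the whole difficulty, not a lemma you may cite; Lemma~\ref{lem3} gives no such control. (ii) The $k=4$ case, which the paper explicitly excludes even from its $O(1)$ statement, is deferred to an unspecified ``bespoke tileable pattern found by finite search''; no pattern is exhibited or verified. (iii) Your seed requirement $\beta(P(m,r))\le m+\lfloor\frac{m}{5}\rfloor$ is \emph{stronger} than the conjecture itself when $5\nmid m$: for instance $\beta(P(m,2))=m+\lceil\frac{m}{5}\rceil>m+\lfloor\frac{m}{5}\rfloor$ for $m\not\equiv 0\pmod 5$, so the claimed existence of such a seed for every residue class of $k$ is unproven, and insisting on $m\equiv 0\pmod 5$ with $m\mid n$ throws almost all $n$ back onto the unsolved seam case.

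Two further technical slips: the isomorphism you invoke, $P(n,k)\cong P(n,k^{-1}\bmod n)$, requires $(n,k)=1$, and even then it does not let you assume $k\lesssim\sqrt{n}$ (e.g.\ $P(13,5)\cong P(13,5)$ up to sign, with $5>\sqrt{13}$); and Theorem~\ref{thm6} carries the hypotheses $m<k$, $m>2r>0$, so the recursion is unavailable whenever $k\equiv 0\pmod m$ or $2(k\bmod m)\ge m$, which already blocks your scheme for infinitely many $(k,m)$ pairs beyond $k=4$. In short, your outline reproduces, in sketch form, the same partial results the paper proves (Proposition~\ref{thm5}, Proposition~\ref{prp1}, Corollary~\ref{cor3}, the $P(n,2)$ result of~\cite{behzad06}), but the steps that would upgrade them to the exact bound $n+\lceil\frac{n}{5}\rceil$ for all $n$ and $k$ are left as hopes rather than arguments, so the conjecture remains open as far as this proposal goes.
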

This conjecture is checked to be true for all $2k<n$, where $n\leq
35$.

\begin{question}It seems that by using ideas similar to the ones used in
the proof of Theorems \ref{thm2} and \ref{thm3}, an algorithm with
polynomial complexity for finding minimum cover of generalized
Petersen graphs can be found. But we have not still  found one.
\end{question}

\section*{Acknowledgements}
This work was partly done while the third author was spending his
sabbatical leave in  the Institute for Advanced Studies in Basic
Sciences (IASBS), Zanjan. We would like to thank the Department
of Mathematics at IASBS for their warm and generous hospitality
and support.


\begin{thebibliography}{10}

\bibitem{alspach83}
B.~Alspach.
\newblock The classification of hamiltonian generalized {P}etersen graphs.
\newblock {\em Journal of Combinatorial Theory, Series B}, 34(3):293--312,
  1983.

\bibitem{ebrahimi06}
J.~Ebrahimi B., N.~Jahanbakht, and E.S. Mahmoodian.
\newblock Vertex domination of generalized {P}etersen graph.
\newblock Submitted.

\bibitem{bannai78}
K.~Bannai.
\newblock Hamiltonian cycles in generalized {P}etersen graphs.
\newblock {\em Journal of Combinatorial Theory, Series B}, 24(2):181--188,
  1978.

\bibitem{behzad06}
M.~Behzad, P.~Hatami, and E.S. Mahmoodian.
\newblock Minimum vertex cover of generalized {P}etersen graph {P}(n,2).
\newblock Submitted.

\bibitem{castagna72}
F.~Castagna and G.~Prins.
\newblock Every generalized petersen graph has a {T}ait coloring.
\newblock {\em Pacific Journal of Mathematics}, 40(1):53--58, 1972.

\bibitem{cormen01}
T.H. Cormen, C.E. Leiserson, R.L. Rivest, and C.~Stein.
\newblock {\em Introduction to Algorithms}.
\newblock McGraw-Hill, second edition, 2001.

\bibitem{mcquillan92}
D.~McQuillan and R.B. Richter.
\newblock On the crossing numbers of certain generalized {P}etersen graphs.
\newblock {\em Discrete Mathematics}, 104(3):311--320, 1992.

\bibitem{salazar05}
G.~Salazar.
\newblock On the crossing numbers of loop networks and generalized {P}etersen
  graphs.
\newblock {\em Discrete Mathematics}, 302(1-3):243--253, 2005.

\bibitem{schwenk89}
A.J. Schwenk.
\newblock Enumeration of hamiltonian cycles in certain generalized {P}etersen
  graphs.
\newblock {\em Journal of Combinatorial Theory, Series B}, 47(1):53--59, 1989.

\bibitem{watkins69}
M.E. Watkins.
\newblock A theorem on {T}ait colorings with an application to generalized
  {P}etersen graphs.
\newblock {\em Journal of Combinatorial Theory}, 6:152--164, 1969.

\bibitem{west01}
D.B. West.
\newblock {\em Introduction to graph theory}.
\newblock Prentice-Hall, 2nd edition, 2001.

\end{thebibliography}

\end{document}